\documentclass[journal,draftcls,onecolumn,12pt,twoside]{IEEEtranTCOM}
\normalsize
\usepackage[latin1]{inputenc}
\usepackage{amsfonts}
\usepackage{amssymb}
\usepackage{amsmath}
\usepackage{graphicx}
\usepackage{amsmath}
\usepackage{float}
\usepackage{graphicx}
\usepackage{color}%
\setcounter{MaxMatrixCols}{30}

\providecommand{\U}[1]{\protect\rule{.1in}{.1in}}

\newtheorem{theorem}{Theorem}
\newtheorem{corollary}{Corollary}

\newtheorem{proposition}{Proposition}

\newenvironment{proof}{\noindent{\bf Proof:} }{\hfill $\Box$ \newline}
\providecommand{\U}[1]{\protect\rule{.1in}{.1in}}

\makeindex

\begin{document}

\title{Coding and Decoding Schemes for MSE and Image Transmission}

\author{Marcelo~Firer,~Luciano~Panek~and~Jerry~Anderson~Pinheiro
\thanks{M. Firer is with IMECC-UNICAMP, University of Campinas,
Brazil (e-mail: mfirer@ime.unicamp.br).}
\thanks{L. Panek is with CECE-UNIOESTE, Universidade Estadual
do Oeste do Paraná, Brazil (e-mail: lucpanek@gmail.com).}
\thanks{J. A. Pinheiro is with
IMECC-UNICAMP, University of Campinas, Brazil (e-mail:
jerryapinheiro@gmail.com).}}

\markboth{IEEE Transactions on Information Theory}%
{Submitted paper}

\maketitle
\IEEEpeerreviewmaketitle

\begin{abstract}
In this work we explore possibilities for coding and decoding tailor-made for
mean squared error evaluation of error in contexts such as image transmission.
To do so, we introduce a \emph{loss function} that expresses the overall
performance of a coding and decoding scheme for discrete channels and that exchanges the usual
goal of minimizing the error probability to that of minimizing the expected
loss. In this environment we explore the possibilities of using ordered decoders
to create a message-wise unequal error protection (UEP), where the most valuable
information is protected by placing in its proximity information words that
differ by a small valued error. We give explicit examples, using
scale-of-gray images, including small-scale performance analysis and visual
simulations for the BSMC.
\end{abstract}

\begin{IEEEkeywords}
Image transmission, mean squared error,  discrete channel, lexicographic encodig, NN decoding, poset metric.
\end{IEEEkeywords}

\section{Introduction\label{Intro}}

\IEEEPARstart{I}{nformation} of various nature (pictures, movies, voice, music, text, etc.) is
compressed using different methods and algorithms (JPEG, MPEG, FLAC, MP3, PDF,
etc.), that takes into account the nature of the information and the different
loss in distortion that may be caused by different errors. When it comes to
information transmission, information is generally considered as just 
sequences of bits and the actual type of the information is generally ignored. In
this work, we consider instances where we do know something about the semantic
value of decoding errors. The main idea is the trade-off between the quantity and
importance of decoding errors, as is usually done in distortion theory.

We start by defining a general \emph{expected loss function}, the valued measure
of decoding errors, in Section \ref{back}. With this broad definition in
mind, we establish various existence results showing the importance of encoders
and decoders, in Section \ref{relevance}.

We give a first heuristic approach for image transmission in Section \ref{ap2},
proposing strategies to address the problem of transmitting images over a
(very) noisy channel. For this purpose, we consider the mean squared error
(MSE) as the value function of decoding errors, proposing both coding and decoding schemes and
considering a (syndrome) decoding algorithm that has the advantage of 
very low complexity. This heuristic approach is developed for a small
dimensional case, and includes performance analyses and visual test performed
on the sample in Figure \ref{mico}.

\begin{figure}[htbp!]
    \centering
       \includegraphics[scale=0.25]{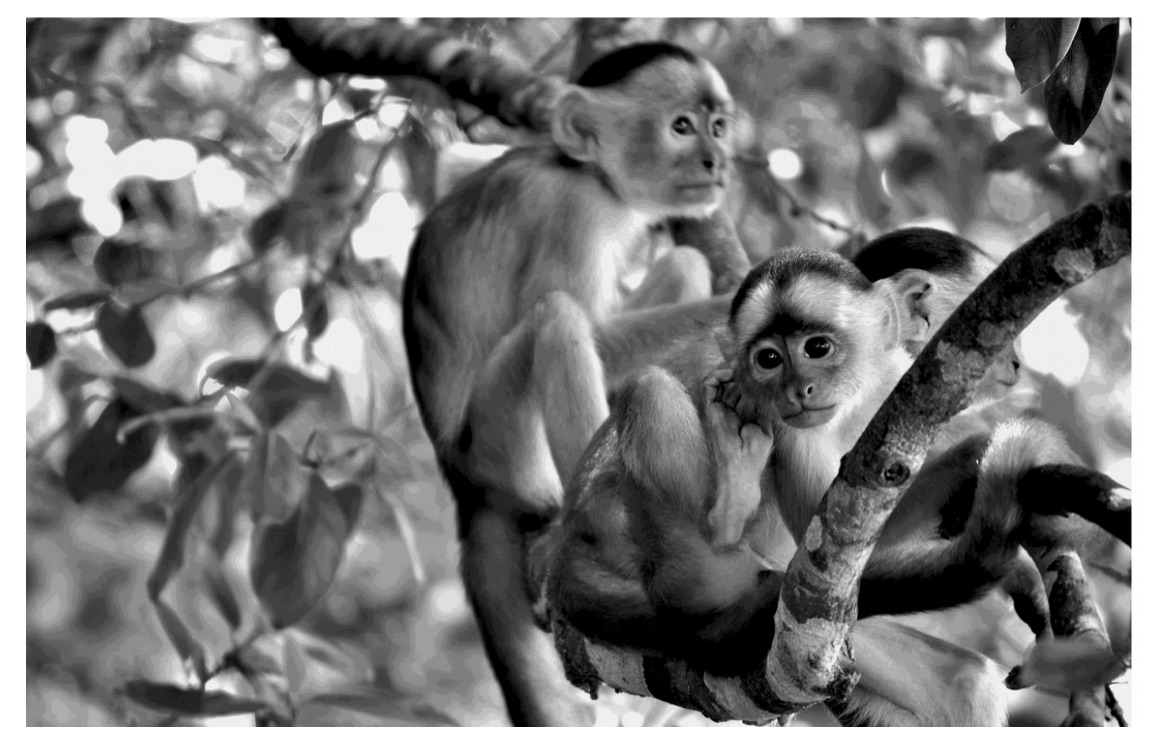}
    \label{mico}
   \caption{A family of tamarins at the Rio Negro, Amazon rain forest.}
\end{figure}

Finally, in Section \ref{translation}, we briefly explore the expected loss function when the error measure of decoding is invariant by translations, a situation where the BER (\textit{bit error probability}) is a particular case of this.

\section{Expected Loss Function\label{back}}

\label{section 1}

In this section we define the expected loss function of a code. We begin by
introducing some basic concepts and notation, following \cite{Ash},
\cite{Bigliere} and \cite{Cover}.

In this work, we consider a \textit{discrete channel} with \textit{input set}
$\mathcal{X}$ and \textit{output set} $\mathcal{Y}$, and we assume that
$\mathcal{X}\subseteq\mathcal{Y}$. We assume that both $\mathcal{X}$ and
$\mathcal{Y}$ are finite sets. The channel is determined by the matrix
$\left(  P\left(  y|x\right)  \right)  _{y\in\mathcal{Y},x\in\mathcal{X}}$ of
conditional probabilities $P\left(  y\text{ is received}|x\text{ is
sent}\right)  $. In subsequent sections we will consider $\mathcal{X}%
=\mathcal{Y}=\mathbb{F}_{q}^{n}$, the $n$-dimensional vector space over the
finite field $\mathbb{F}_{q}$ with $q$ elements, so we stress that $P\left(
y|x\right)  $ is considered over the words $x=\left(  x_{1},\ldots
,x_{n}\right)$ and $y=\left(  y_{1},\ldots,y_{n}\right)  $. We denote a discrete
channel by $P_{\mathcal{X},\mathcal{Y}}$.

We consider an \emph{information set }$\mathcal{I}{\ }$with $M=\left\vert
\mathcal{I}\right\vert $ elements. A \emph{coding-decoding scheme} for $\mathcal{I}$ over $P_{\mathcal{X},\mathcal{Y}}$ is a
triple {$\left(  C,f,a\right)  $ consisting of: $(i)$ a \emph{code}
$C\subseteq\mathcal{X}$ having }$M$ elements{; $(ii)$ a \emph{channel encoder
}$f:\mathcal{I}\rightarrow C$, where }$f$ may be any bijective map; and {$(iii)$ a
}\emph{channel decoder }consisting of a surjective map {$a:\mathcal{Y}%
\rightarrow C$}. In such a coding-decoding scheme, a piece of information $\iota\in\mathcal{I}$ is
encoded as a codeword $c=f\left(  \iota\right)  \in C$, and, after $c$ is transmitted,  a
message $y\in\mathcal{Y}$ is received and decoded as $a\left(  y\right)  $.
Because the encoder $f$ is assumed to be a bijection over $C$, the decoding
process is completed by assuming the original information was $f^{-1}\left(
a\left(  y\right)  \right)  \in\mathcal{I}$.

Let $D\left(  c\right)  $ be the \emph{decision region}\textit{ of} $c$
relative to the decoder $a$:
\[
D\left(  c\right)  :=a^{-1}\left(  c\right)  =\left\{  y\in\mathcal{Y}%
:a\left(  y\right)  =c\right\}  \text{.}%
\]
The decision regions $D\left(  c\right)  $ of a decoder $a$ determine a
partition of $\mathcal{Y}$. Given a coding-decoding scheme, an error occurs if $c$ is sent
and the received codeword lies in some decision region $D\left(  c^{\prime
}\right)  $ with $c^{\prime}\neq c$. The \textit{word error probability of}
$c$ is therefore
\[
P_{e}\left(  c\right)  =1-\sum_{y\in D\left(  c\right)  }P\left(  \left.
y\right\vert c\right)  \text{.}%
\]
Assuming that the probability distribution of $C$ is \textit{uniform}, that
is, each codeword $c$ is transmitted with probability $P\left(  c\right)
=\frac{1}{M}$ with $M:=\left\vert C\right\vert $, the \textit{word error
probability of }$C$ (WEP) is the average
\[
P_{e}\left(  C\right)  =\frac{1}{M}\sum_{c\in C}P_{e}\left(  c\right)
\text{.}%
\]

We  now let $\mathbb{R}_{+}$ denotes the set of non-negative real numbers and
consider the map
\[
\nu_{0\text{-}1}:C\times C\rightarrow\mathbb{R}_{+}%
\]
given by
\[
\nu_{0\text{-}1}\left(  c,c^{\prime}\right)  =\left\{
\begin{tabular}
[c]{ccc}%
$0$ & if & $c=c^{\prime}$\\
$1$ & if & $c\neq c^{\prime}$%
\end{tabular}
\ \right.  \text{.}%
\]
It follows that the word error probability may be expressed as
\begin{equation}
\label{probabilidade de erro}P_{e}\left(  C\right)  =\frac{1}{M}\sum_{c\in
C}\sum_{y\in\mathcal{Y}}\nu_{0\text{-}1}\left(  a\left(  y\right)  ,c\right)
P\left(  \left.  y\right\vert c\right)  \text{.}%
\end{equation}

The function $\nu_{0\text{-}1}$ is the \textit{indicator function}, which  
detects decoding errors (see \cite{Cover}) but does not distinguish
between different such errors.

In many real-world situations, such as the transmission of digital images, it is
reasonable to attribute different values to different errors, and this is what
will be done in this work, replacing $\nu_{0\text{-}1}$ by value
functions that may assume any (non-negative) real value.

An \emph{error value function }for a set of information \textit{$\mathcal{I}$
} is a map\textbf{\ }$\nu$ that associates to each pair of information a
non-negative real number
\[
\nu:\mathcal{I}\times\mathcal{I}\rightarrow\mathbb{R}_{+}%
\]
where $\nu\left(  \iota_{1},\iota_{2}\right)  $ is the cost of exchanging
$\iota_{2}$ by $\iota_{1}$. If $C$ is a code and $f:\mathcal{I}\rightarrow C$
is an encoder, we denote by
\[
\nu_{f}:C\times C\rightarrow\mathbb{R}_{+}%
\]
the \emph{error value function induced by the encoder $f$}: given $\iota
_{1},\iota_{2}\in\mathcal{I}$, we define
\[
\nu_{f}\left(  f\left(  \iota_{1}\right)  ,f\left(  \iota_{2}\right)  \right)
:=\nu\left(  \iota_{1},\iota_{2}\right)  .
\]

We shall refer to $\nu$ and $\nu_{f}$  as just \textit{error value functions}.
By considering such an error value function, we are interested in evaluating
the errors that may occur during the process consisting of coding,
transmitting and decoding information.

Given a coding-decoding scheme $\left(  C,f,a\right)  $ and a error value function $\nu
_{f}:C\times C\rightarrow\mathbb{R}_{+}$, let us denote by $\mathbb{E}%
(C,\left.  \nu_{f},a\right\vert y)$ the \emph{posterior expected loss }of the
induced error value function $\nu_{f}\left(  c,a\left(  y\right)  \right)  $,
given that $y$ is observed:
\begin{equation}
\mathbb{E}(C,\left.  \nu_{f},a\right\vert y)=\sum_{c\in C}\nu_{f}\left(
c,a\left(  y\right)  \right)  P\left(  \left.  c\right\vert y\right)  \text{.}%
\end{equation}
We define the\textit{\ }\emph{overall expected loss}\textit{ }of the coding-decoding scheme
as the average of the expected loss for all possible $y\in\mathcal{Y}$,
\[
\mathbb{E}\left(  C,a,\nu_{f}\right)  =\sum_{y\in\mathcal{Y}}\mathbb{E}%
(C,\left.  \nu_{f}\left(  c,a(y)\right)  \right\vert y)P(y),
\]
where $P(y)=\sum_{c\in C}P(c)P\left(  \left.  c\right\vert y\right)  $ is the
probability of receiving $y$. Considering Bayes' Rule, this expression can be
rewritten as
\[
\mathbb{E}\left(  C,a,\nu_{f}\right)  =\sum_{c\in C}\sum_{y\in\mathcal{Y}}%
\nu_{f}\left(  c,a\left(  y\right)  \right)  P\left(  \left.  y\right\vert
c\right)  P(c)
\]
and thus
\begin{equation}
\mathbb{E}\left(  C,a,\nu_{f}\right)  =\sum_{\left(  c,c^{\prime}\right)  \in
C\times C}G_{a}\left(  c,c^{\prime}\right)  \nu_{f}\left(  c,c^{\prime
}\right)  \label{perda esperada}%
\end{equation}
where
\begin{equation}
G_{a}\left(  c,c^{\prime}\right)  =\sum_{y\in a^{-1}\left(  c\right)
}P\left(  \left.  c^{\prime}\right\vert y\right)  P\left(  y\right)  \text{.}
\label{polinomio G}%
\end{equation}

In a general setting, we consider the following data to be given:

\begin{itemize}
\item The error value function $\nu$ (determined by the nature of the information);

\item The size of the code $C$ (determined by $\left\vert \mathcal{I}%
\right\vert $);

\item The rate $\left\vert \mathcal{I}\right\vert /\left\vert \mathcal{Y}%
\right\vert =\left\vert C\right\vert /\left\vert \mathcal{Y}\right\vert $
(determined by cost constraints);

\item The channel model $P_{\mathcal{X},\mathcal{Y}}$ (determined by physical conditions).
\end{itemize}

In such a setting, we say that the triple $\left(  C^{\ast},f^{\ast},a^{\ast
}\right)  $ is a \emph{Bayes coding-decoding scheme }if
\[
\mathbb{E}\left(  C^{\ast},a^{\ast},\nu_{f^{\ast}}\right)  =\min_{\left(
C,a,f\right)  }\mathbb{E}\left(  C,a,\nu_{f}\right)
\]
where the minimum is taken over all encoding-decoding schemes for
$\mathcal{I}$ over $P_{\mathcal{X},\mathcal{Y}}$.

As expected, determining a Bayes coding-decoding scheme is a (very) hard problem, so we may
consider each of the variables $C$, $f$ and $a$ independently and say that
$a^{\ast}$ is a \emph{Bayes decoder} of the pair $\left(  C_{0},f_{0}\right)
$ if
\[
\mathbb{E}\left(  C_{0},a^{\ast},\nu_{f_{0}}\right)  =\min_{\left(
C_{0},a,f_{0}\right)  }\mathbb{E}\left(  C_{0},a,\nu_{f_{0}}\right)  \text{.}%
\]
Similarly, we say that $f^{\ast}$ is a \emph{Bayes encoder} of the pair
$\left(  C_{0},a_{0}\right)  $ if
\[
\mathbb{E}\left(  C_{0},a_{0},\nu_{f^{\ast}}\right)  =\min_{\left(
C_{0},a_{0},f\right)  }\mathbb{E}\left(  C_{0},a_{0},\nu_{f}\right)  \text{.}%
\]

\section{Relevance of Encoders and Decoders\label{relevance}}

In this section, we show that, in quite general instances, every encoder and
every decoder may be relevant, depending on the error value functions to be considered.

We start by remarking that there are two classes of decoders that may be considered to
be \textquotedblleft universal," in the sense that they define a decoding criteria
$a:\mathcal{Y}\rightarrow C$ for any code $C$ in $\mathcal{X}$. The first
class is the probabilistic criterion determined by the channel, known as a
\textit{maximal likelihood decoder} (ML-decoder): given $y\in\mathcal{Y}$,
$a\left(  y\right)  \in C$ satisfies the inequality
\[
P\left(  y|a\left(  y\right)  \right)  \geq P\left(  y|c\right)  \text{ for
all }c\in C.
\]
The second class of universal decoders, relatives to the discrete channels
such that $\mathcal{X}=\mathcal{Y}$, are the so-called \textit{nearest
neighbor decoders} (NN-decoders) determined by a metric $d:\mathcal{X}%
\times\mathcal{X}\rightarrow\mathbb{R}_{+}$: given $y\in\mathcal{X}$,
$a\left(  y\right)  \in C$ satisfies the inequality
\[
d\left(  y,a\left(  y\right)  \right)  \leq d\left(  y,c\right)  \text{ for
all }c\in C.
\]

We start by proving that, for any linear code and any ML-decoder, there are always
error value functions for which it is better to use a different decoder.

From here on in this section, we assume that the prior probability of $C$ is
uniform.

{\color{red} }

\begin{theorem}
\label{principal1} Let $\left(  C,f,a\right)  $ be a coding-decoding scheme over a
reasonable discrete channel $P_{\mathcal{X},\mathcal{Y}}$, that is,
$P(x|x)>P(y|x)$ for all $y\neq x$. If $a:\mathcal{Y}\rightarrow C$ is an
ML-decoder, then there exists a decoder $b:\mathcal{Y}\rightarrow C$ and error
value functions $\nu_{f}$ and $\widetilde{\nu}_{f}$ such that
\[
\mathbb{E}\left(  C,a,\nu_{f}\right)  >\mathbb{E}\left(  C,b,\nu_{f}\right)
\text{ and }
\mathbb{E}\left(  C,a,\widetilde{\nu}_{f}\right)  <\mathbb{E}\left(
C,b,\widetilde{\nu}_{f}\right)  \text{.}%
\]

\end{theorem}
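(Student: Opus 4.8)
The plan is to reduce the global comparison to a single received word and to exploit that the maximum-likelihood rule is tuned to the indicator value function $\nu_{0\text{-}1}$ and to nothing else. Writing $\mathbb{E}(C,a,\nu_f)=\sum_{y\in\mathcal{Y}}\mathbb{E}(C,\nu_f,a|y)P(y)$ as in the definition of the overall expected loss, I note that if a decoder $b$ agrees with $a$ everywhere except on one word $y_0$, where $a(y_0)=c_1$ and $b(y_0)=c_2$ with $c_1\neq c_2$, then only the $y_0$-term changes and
\[
\mathbb{E}(C,b,\nu_f)-\mathbb{E}(C,a,\nu_f)=P(y_0)\Big(\sum_{c\in C}\nu_f(c,c_2)P(c|y_0)-\sum_{c\in C}\nu_f(c,c_1)P(c|y_0)\Big).
\]
Everything is thus governed by the two posterior expected losses of deciding $c_2$ versus $c_1$ at $y_0$, weighted by $P(y_0)$. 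Since $f$ is a bijection, prescribing $\nu_f$ on $C\times C$ amounts to prescribing the underlying $\nu$, so I am free to pick any nonnegative $\nu_f$.

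I would locate a convenient $y_0$ using the reasonableness hypothesis. Let $c^{\ast}\in\arg\max_{c\in C}P(c|c)$ and set $y_0=c^{\ast}$. For every other codeword $c'$ the reasonable condition (applied with input $c'$ and output $c^{\ast}\neq c'$) gives $P(c^{\ast}|c')<P(c'|c')\leq P(c^{\ast}|c^{\ast})$, whence $P(c^{\ast}|c^{\ast})>P(c^{\ast}|c')$ strictly for all $c'\neq c^{\ast}$. Under the uniform prior $c^{\ast}$ is therefore the unique likelihood maximizer at $y_0=c^{\ast}$, forcing $a(c^{\ast})=c^{\ast}=:c_1$. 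I then pick any other codeword $c_2\neq c^{\ast}$ and define $b$ to equal $a$ except that $b(c^{\ast})=c_2$.

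With this $b$ fixed I exhibit the two value functions. For $\widetilde{\nu}_f=\nu_{0\text{-}1}$ the inner sums collapse via $\sum_{c}\nu_{0\text{-}1}(c,c')P(c|y_0)=1-P(c'|y_0)$, so the displayed difference equals $P(y_0)\big(P(c_1|y_0)-P(c_2|y_0)\big)$; this is strictly positive since, under the uniform prior, $P(c_1|y_0)>P(c_2|y_0)$ is equivalent to $P(y_0|c_1)>P(y_0|c_2)$, i.e. to $P(c^{\ast}|c^{\ast})>P(c^{\ast}|c_2)$, which the previous paragraph supplies. This yields $\mathbb{E}(C,a,\widetilde{\nu}_f)<\mathbb{E}(C,b,\widetilde{\nu}_f)$, reflecting that the ML decoder minimizes the word error probability. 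For $\nu_f$ I tailor the cost to reward the decision $c_2$: set $\nu_f(c,c_2)=0$ for all $c$, $\nu_f(c_2,c_1)=1$, and $\nu_f=0$ otherwise. Then the first inner sum vanishes while the second equals $P(c_2|y_0)>0$, so the difference is $-P(y_0)P(c_2|y_0)<0$, i.e. $\mathbb{E}(C,a,\nu_f)>\mathbb{E}(C,b,\nu_f)$, as required.

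The hard part will not be the inequalities but keeping $b$ a legitimate decoder and keeping every strict inequality genuine. Surjectivity of $b$ fails if $D(c^{\ast})=\{c^{\ast}\}$, since moving $y_0=c^{\ast}$ to $c_2$ would drop $c^{\ast}$ from the image; I must therefore place $y_0$ inside a decision region that contains a second word, which is automatic once $|\mathcal{Y}|>|C|$ — the relevant case of a proper (e.g. linear) code — and I will have to argue that the strict likelihood word can be taken there. Likewise the reward step needs a codeword $c_2\neq c^{\ast}$ with $P(c^{\ast}|c_2)>0$, a positivity that holds whenever the channel has full support (as for the BSC/BSMC treated later); the degenerate zero-support channels, where the statement can actually fail, must be excluded by this non-degeneracy. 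Reconciling these two requirements — relocating $y_0$ to a region of size at least two while preserving the strict likelihood gap furnished by the $c^{\ast}$ argument — is the delicate point of the proof.
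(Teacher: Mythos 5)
Your single-point perturbation is a genuinely different decomposition from the paper's, and its first half is sound: localizing the change of decoder to one received word $y_{0}$, taking $y_{0}=c^{\ast}$ with $c^{\ast}\in\arg\max_{c}P(c|c)$, and deducing $a(c^{\ast})=c^{\ast}$ from $P(c^{\ast}|c^{\ast})\geq P(c'|c')>P(c^{\ast}|c')$ is clean (and in fact more careful than the paper, which silently assumes the ML decoder fixes its chosen codewords). But the two issues you defer to the end are not loose ends to be tidied later; they are genuine gaps that your construction cannot close under the stated hypotheses. First, surjectivity: the theorem allows $\mathcal{Y}=\mathcal{X}=C$ (the no-redundancy case treated explicitly in Section \ref{ap2}), where every ML decision region is the singleton $\{c\}$, so reassigning $b(c^{\ast})=c_{2}$ removes $c^{\ast}$ from the image and $b$ is no longer a decoder; and even when $|\mathcal{Y}|>|C|$ nothing forces $D(c^{\ast})$ in particular to contain a second point. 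Second, strictness of $\mathbb{E}(C,a,\nu_{f})>\mathbb{E}(C,b,\nu_{f})$: your tailored $\nu_{f}$ concentrates all cost off the diagonal, so the difference is $-P(y_{0})P(c_{2}|y_{0})$, which vanishes if $P(c^{\ast}|c_{2})=0$; full support is not among the hypotheses, and your suggestion that the theorem ``can actually fail'' for such channels is wrong -- it holds even for the noiseless channel.

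The paper's construction avoids both obstacles at once. It modifies $a$ by a \emph{swap}, $b(c_{1})=c_{2}$, $b(c_{2})=c_{1}$, so surjectivity is automatic, and for the first inequality it uses $\nu_{f}=1-\nu_{0\text{-}1}$, which charges the \emph{diagonal}: the resulting difference is $\frac{1}{M}\bigl(P(c_{1}|c_{1})+P(c_{2}|c_{2})-P(c_{2}|c_{1})-P(c_{1}|c_{2})\bigr)$, strictly positive by adding the two instances of the reasonableness condition, with no need for any cross-probability to be nonzero. If you want to salvage your route, replace the single reassignment by the same swap (your $c^{\ast}$ argument then justifies $a(c_{i})=c_{i}$ at both swapped points) and move the cost of $\nu_{f}$ onto the diagonal; as written, the proof is incomplete.
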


{}

\begin{proof}
Let $C$ be a code, $a:\mathcal{Y}\rightarrow C$ an
ML-decoder and $\nu_{f}:C\times C\rightarrow\mathbb{R}_{+}$ the error value
function defined by
\[
\nu_{f}\left(  c,c^{\prime}\right)  =\left\{
\begin{array}
[c]{ccc}%
1 & \text{ if } & c=c^{\prime}\\
0 & \text{ if } & c\neq c^{\prime}%
\end{array}
\right.  \text{.}%
\]
Note that $\nu_{f}$ does not depend of the encoder $f$ and that it may be expressed
as $\nu_{f}\left(  c,c^{\prime}\right)  =1-\nu_{0\text{-}1}\left(
c,c^{\prime}\right)  $.

We consider now $c_{1},c_{2}\in C$, with $c_{1}\neq c_{2}$, and define
$b:=b_{a,c_{1},c_{2}}:\mathcal{Y}\rightarrow C$ by
\[
b\left(  y\right)  =\left\{
\begin{array}
[c]{ccc}%
c_{1} & \text{ if } & y=c_{2}\\
c_{2} & \text{ if } & y=c_{1}\\
a\left(  y\right)  & \text{ otherwise } &
\end{array}
\right.  \text{.}%
\]

We are considering two different decoders and, for a given error value
function, we wish to look at the difference  $\mathbb{E}\left(  C,a,\nu
_{f}\right)  -\mathbb{E}\left(  C,b,\nu_{f}\right)  $. We define
\[
G_{a}\left(  c\right)     :=\sum_{y\in a^{-1}\left(  c\right)  }P\left(
c|y\right)  P\left(  y\right) \label{MM copy(1)}\text{ and }
G_{b}\left(  c\right)    :=\sum_{y\in b^{-1}\left(  c\right)  }P\left(
c|y\right)  P\left(  y\right)
\]
and thus, considering the error value function $\nu_{f}$, we may express the
difference%
\begin{equation}
\mathbb{E}\left(  C,a,\nu_{f}\right)  -\mathbb{E}\left(  C,b,\nu_{f}\right)
=\sum_{c\in C}G_{a}\left(  c\right)  -\sum_{c\in C}G_{b}\left(  c\right)
\label{mn}%
\end{equation}
because $\nu_{f}\left(  c,c^{\prime}\right)  =0$ if $c\neq c^{\prime}$ and
$\nu_{f}\left(  c,c\right)  =1$ for all $c\in C$.

Because for $y\neq c_{1},c_{2}$ we have that $a_{1}\left(  y\right)
=a_{2}\left(  y\right)  $, the equation (\ref{mn}) reduces to
\[
\mathbb{E}\left(  C,a,\nu_{f}\right)  -\mathbb{E}\left(  C,b,\nu_{f}\right)
=
\]
\[
P\left(  c_{1}|c_{1}\right)  P\left(  c_{1}\right)  +P\left(  c_{2}%
|c_{2}\right)  P\left(  c_{2}\right) 
-P\left(  c_{1}|c_{2}\right)  P\left(  c_{2}\right)  -P\left(
c_{2}|c_{1}\right)  P\left(  c_{1}\right)
\]
and assuming that the probability distribution of $C$ is uniform, we get
\[
\mathbb{E}\left(  C,a,\nu_{f}\right)  -\mathbb{E}\left(  C,b,\nu_{f}\right)
=
\]
\[
\frac{1}{M}\left(  P\left(  c_{1}|c_{1}\right)  +P\left(  c_{2}|c_{2}\right)
-P\left(  c_{2}|c_{1}\right)  -P\left(  c_{1}|c_{2}\right)  \right)  \text{.}%
\]
Because
\[
P\left(  c_{1}|c_{1}\right)  >P\left(  c_{1}|c_{2}\right)  \text{ and
}P\left(  c_{2}|c_{2}\right)  >P\left(  c_{2}|c_{1}\right),
\]
it follows that
\[
\mathbb{E}\left(  C,a,\nu_{f}\right)  -\mathbb{E}\left(  C,b,\nu_{f}\right)
>0\text{.}%
\]

To obtain the inequality
\[
\mathbb{E}\left(  C,a,\widetilde{\nu}\right)  -\mathbb{E}\left(
C,b,\widetilde{\nu}\right)  <0\text{,}%
\]
we can consider $\widetilde{\nu}$ to be  $\widetilde{\nu}=\nu_{0\text{-}1}%
$. Because $a$ is an ML-decoder, we have that
\[
\mathbb{E}\left(  C,a,\widetilde{\nu}\right)  < \mathbb{E}\left(
C,b,\widetilde{\nu}\right)  \text{.}%
\]

\end{proof}


\begin{corollary}
Suppose $\mathcal{X}=\mathcal{Y}$ and let $\left(  C,f,a\right)  $ be an
coding-decoding scheme over $\mathcal{X}$ such that $a:\mathcal{X}\rightarrow C$ is an
NN-decoder determined by a metric $d$. Then, there exists a discrete channel
$P_{\mathcal{X},\mathcal{X}}$, a decoder $b:\mathcal{X}\rightarrow C$ and
error value functions $\nu_{f}$ and $\widetilde{\nu}_{f}$ such that
\[
\mathbb{E}\left(  C,a,\nu_{f}\right)  >\mathbb{E}\left(  C,b,\nu_{f}\right)
\text{ and }
\mathbb{E}\left(  C,a,\widetilde{\nu}_{f}\right)  <\mathbb{E}\left(
C,b,\widetilde{\nu}_{f}\right)  .
\]

\end{corollary}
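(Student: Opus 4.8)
The plan is to reduce the statement to Theorem \ref{principal1} by exhibiting a reasonable discrete channel $P_{\mathcal{X},\mathcal{X}}$ for which the given NN-decoder $a$ is simultaneously a maximum likelihood decoder. Once such a channel is in place it satisfies the hypothesis $P(x|x)>P(y|x)$ for $y\neq x$ and $a$ is an ML-decoder, so Theorem \ref{principal1} applies verbatim and produces the decoder $b$ together with the error value functions $\nu_f$ and $\widetilde{\nu}_f$ realizing the two reversed inequalities. Thus the entire problem reduces to constructing a single channel that ties the metric $d$ to the likelihoods.

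First I would build the channel directly from the metric. Fixing a parameter $\lambda\in(0,1)$, I set
\[
P\left(y|x\right)=\frac{\lambda^{d(x,y)}}{Z_x},\qquad Z_x=\sum_{y'\in\mathcal{X}}\lambda^{d(x,y')},
\]
so that each column sums to $1$ and $P_{\mathcal{X},\mathcal{X}}$ is a genuine stochastic matrix. Reasonableness is immediate: within a fixed column $x$ the entries $P(\cdot|x)$ are obtained from $\lambda^{d(x,\cdot)}$ by dividing by the common constant $Z_x$, and since $d(x,x)=0<d(x,y)$ for every $y\neq x$ while $0<\lambda<1$, we have $\lambda^{0}>\lambda^{d(x,y)}$, hence $P(x|x)>P(y|x)$. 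For the coincidence of the two decoders, fix an output $y$ and two codewords $c,c'$ with $d(y,c)<d(y,c')$; then $P(y|c)/P(y|c')=\lambda^{d(y,c)-d(y,c')}\left(Z_{c'}/Z_c\right)$, whose exponent is negative while $Z_{c'}/Z_c\to 1$ as $\lambda\to 0$, so the ratio tends to $+\infty$. Consequently, for $\lambda$ small enough, the codeword minimizing $d(y,\cdot)$ is exactly the codeword maximizing $P(y|\cdot)$, and the NN-decoder $a$ is an ML-decoder for this channel.

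The step I expect to be the main obstacle is the treatment of ties, i.e. outputs $y$ admitting codewords $c_1\neq c_2$ with $d(y,c_1)=d(y,c_2)$ at the minimum distance. For such $y$ the likelihoods $\lambda^{d(y,c_1)}/Z_{c_1}$ and $\lambda^{d(y,c_2)}/Z_{c_2}$ differ whenever the normalizing constants differ, so the ML-decoder might strictly prefer a codeword other than the one $a$ selected by its own tie-breaking rule, and then $a$ would fail to be an ML-decoder. I would resolve this in one of two ways. If $d$ is distance-homogeneous, meaning $\left|\{y:d(x,y)=k\}\right|$ is independent of $x$ for each $k$ (as for the Hamming and, more generally, translation-invariant metrics relevant here), then $Z_x$ is constant and maximizing $P(y|\cdot)$ is exactly minimizing $d(y,\cdot)$, ties included. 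In the general case I would instead perturb $P(\cdot|\cdot)$ by an amount that raises $P(y|a(y))$ just above its equidistant competitors for every $y$, taken small enough to leave every strict distance ordering and every reasonableness inequality intact; finiteness of $\mathcal{X}$ guarantees a uniform positive margin in which such a perturbation fits. Either way $a$ becomes an ML-decoder of a reasonable channel, and Theorem \ref{principal1} delivers the decoder $b$ and the functions $\nu_f,\widetilde{\nu}_f$ with the required inequalities.
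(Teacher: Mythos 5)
Your proof is correct and follows the paper's route exactly: the paper's own proof of this corollary is a one-line reduction to Theorem \ref{principal1} via the fact that every metric admits a ``matched'' channel for which the NN-decoder and the ML-decoder coincide on every code, citing \cite{WF} for that fact. You supply that matched-channel construction explicitly --- $P\left(y|x\right)=\lambda^{d(x,y)}/Z_{x}$ with $\lambda$ small enough to dominate the ratios $Z_{c'}/Z_{c}$, plus a small perturbation to break distance ties in favor of $a$ --- which correctly fills in the cited lemma, the tie-breaking being precisely the delicate point.
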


\begin{proof}
The proof follows from Theorem \ref{principal1} and from the fact that, given
a metric $d:\mathcal{X}\times\mathcal{X}\rightarrow\mathbb{R}_{+}$, there is a
discrete channel over $\mathcal{X}$ such that the NN-decoder determined by
$d$ and the ML-decoder coincide for any code $C\subseteq\mathcal{X}$ (proof to be found in \cite{WF}).
\end{proof}

From here on, we assume that $\mathcal{X}=\mathcal{Y}$ and write
$P_{\mathcal{X},\mathcal{Y}}=P_{\mathcal{X}}$.

\begin{theorem}
Let $a_{1}\neq a_{2}$ be two NN-decoders defined on $\mathcal{X}$, determined
respectively by the metrics $d_{1}$ and $d_{2}$. Then, there is a code
$C\subseteq\mathcal{X}$, encoders $f_{1},f_{2}:\mathcal{I}\rightarrow C$,
error value functions $\nu_{1},\nu_{2}:\mathcal{I}\times\mathcal{I}%
\rightarrow\mathbb{R}_{+}$ and an open family of discrete channels
$P_{\mathcal{X}}$ over $\mathcal{X}$ such that
\[
\mathbb{E}\left(  C,a_{1},\nu_{1}\right)  >\mathbb{E}\left(  C,a_{2},\nu
_{1}\right)
\text{ and }
\mathbb{E}\left(  C,a_{1},\nu_{2}\right)  <\mathbb{E}\left(  C,a_{2},\nu
_{2}\right)  \text{.}%
\]

\end{theorem}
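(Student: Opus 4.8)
The plan is to work directly with the bilinear expression $\mathbb{E}\left(C,a,\nu_{f}\right)=\sum_{\left(c,c^{\prime}\right)}G_{a}\left(c,c^{\prime}\right)\nu_{f}\left(c,c^{\prime}\right)$ from (\ref{perda esperada})--(\ref{polinomio G}) and to compare the two decoders through the \emph{difference array} $\Delta\left(c,c^{\prime}\right):=G_{a_{1}}\left(c,c^{\prime}\right)-G_{a_{2}}\left(c,c^{\prime}\right)$. For any value function one has
\[
\mathbb{E}\left(C,a_{1},\nu_{f}\right)-\mathbb{E}\left(C,a_{2},\nu_{f}\right)=\sum_{\left(c,c^{\prime}\right)}\Delta\left(c,c^{\prime}\right)\nu_{f}\left(c,c^{\prime}\right),
\]
so both inequalities will follow at once if I can locate a single ``column'' of $\Delta$ containing entries of opposite sign and then concentrate the value functions on the corresponding pairs.

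The structural fact I would establish first is that the decision regions $\left\{a^{-1}\left(c\right)\right\}_{c\in C}$ partition $\mathcal{Y}$, so that for every fixed $c^{\prime}$,
\[
\sum_{c\in C}G_{a}\left(c,c^{\prime}\right)=\sum_{y\in\mathcal{Y}}P\left(c^{\prime}|y\right)P\left(y\right)=P\left(c^{\prime}\right),
\]
\emph{independently of the decoder} $a$. Hence each column of $\Delta$ sums to zero, $\sum_{c}\Delta\left(c,c^{\prime}\right)=0$. This is the engine of the argument: as soon as $\Delta$ is not identically zero, any nonzero column $\Delta\left(\cdot,c_{0}^{\prime}\right)$ must contain both a strictly positive entry $\Delta\left(c_{+},c_{0}^{\prime}\right)>0$ and a strictly negative entry $\Delta\left(c_{-},c_{0}^{\prime}\right)<0$.

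Next I would exhibit a code and a channel for which $\Delta\neq0$. Since $a_{1}\neq a_{2}$, there is a code $C$ and a word $y_{0}$ with $a_{1}\left(y_{0}\right)=c_{1}\neq c_{2}=a_{2}\left(y_{0}\right)$; crucially, the decision regions of NN-decoders depend only on the metrics and on $C$, not on the channel, so I may fix $C$ and then vary the channel freely. Writing $G_{a_{i}}\left(c_{1},c_{0}^{\prime}\right)=\tfrac{1}{M}\sum_{y}\mathbf{1}_{a_{i}\left(y\right)=c_{1}}P\left(y|c_{0}^{\prime}\right)$ via Bayes' rule and the uniform prior, a channel that concentrates the output of some input $c_{0}^{\prime}$ near $y_{0}$ (say $P\left(y_{0}|c_{0}^{\prime}\right)\to1$) forces
\[
\Delta\left(c_{1},c_{0}^{\prime}\right)=\tfrac{1}{M}\sum_{y}\bigl(\mathbf{1}_{a_{1}\left(y\right)=c_{1}}-\mathbf{1}_{a_{2}\left(y\right)=c_{1}}\bigr)P\left(y|c_{0}^{\prime}\right)\longrightarrow\tfrac{1}{M}>0 ,
\]
so $\Delta\not\equiv0$ and the previous paragraph applies with $c_{+}=c_{1}$. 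I then take value functions on $C\times C$ supported on single pairs, $\mu_{1}$ on $\left(c_{+},c_{0}^{\prime}\right)$ and $\mu_{2}$ on $\left(c_{-},c_{0}^{\prime}\right)$, which give $\mathbb{E}\left(C,a_{1},\mu_{1}\right)-\mathbb{E}\left(C,a_{2},\mu_{1}\right)=\Delta\left(c_{+},c_{0}^{\prime}\right)\mu_{1}\left(c_{+},c_{0}^{\prime}\right)>0$ and $\mathbb{E}\left(C,a_{1},\mu_{2}\right)-\mathbb{E}\left(C,a_{2},\mu_{2}\right)=\Delta\left(c_{-},c_{0}^{\prime}\right)\mu_{2}\left(c_{-},c_{0}^{\prime}\right)<0$ on the \emph{same} channel. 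To match the statement verbatim I realize $\mu_{1},\mu_{2}$ as induced functions by fixing any bijection $f$ and setting $\nu_{i}=\mu_{i}\circ\left(f\times f\right)$, so one may even take $f_{1}=f_{2}=f$. Openness is then immediate, since $\Delta\left(c,c^{\prime}\right)$ is linear, hence continuous, in the transition probabilities: the conditions $\Delta\left(c_{+},c_{0}^{\prime}\right)>0$ and $\Delta\left(c_{-},c_{0}^{\prime}\right)<0$ cut out a nonempty open family of channels on which both inequalities persist.

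The step I expect to be the main obstacle is guaranteeing $\Delta\not\equiv0$ for a genuine open family of channels rather than at a single degenerate point, that is, ruling out that the difference in decision regions washes out under every channel. The zero-column-sum identity, combined with the decoupling of the (channel-independent) decision regions from the channel, is exactly what I would lean on: once a single disagreement point $y_{0}$ is exhibited, concentrating output mass near it makes the relevant column entry provably nonzero, and continuity upgrades this single channel to an open family.
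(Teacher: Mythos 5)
Your proof is correct, but it takes a genuinely different route from the paper's. The paper immediately reduces to a two-word code $C=\{c_1,c_2\}$ built from a disagreement point $y_0$ with $a_1(y_0)=c_1\neq c_2=a_2(y_0)$, takes the two off-diagonal value functions supported on $(c_1,c_2)$ and $(c_2,c_1)$, and after manipulating the sets $V_i^j$ reduces both desired inequalities to the single pair of conditions $\sum_{y\in V_1}P(y|c_i)>\sum_{y\in V_2}P(y|c_i)$, $i=1,2$, where $V_1=\{y:a_1(y)=c_2,\,a_2(y)=c_1\}$ and $V_2=\{y:a_1(y)=c_1,\,a_2(y)=c_2\}$ are disjoint; it then asserts that a channel satisfying these strict inequalities exists and invokes continuity for openness. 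You instead keep the code general and exploit the conservation law $\sum_{c\in C}G_a(c,c')=P(c')$, which is decoder-independent, so every column of $\Delta=G_{a_1}-G_{a_2}$ sums to zero; a single strictly positive entry (manufactured by concentrating $P(\cdot|c_0')$ at $y_0$) then forces a strictly negative entry in the same column, and Dirac-type value functions on those two pairs finish the job on the same channel. Your version buys robustness: the paper's final existence claim implicitly needs the swap set $V_1$ to be nonempty (if $V_1=\varnothing$ the required inequality $\sum_{V_1}>\sum_{V_2}$ is unsatisfiable for every channel), and that nonemptiness does not follow from $a_1\neq a_2$ alone, whereas your column-sum argument needs only the one disagreement point. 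The price is that the negative entry you locate may sit on the diagonal, so one of your value functions may charge a pair $(c,c)$; this is permitted by the paper's definition of an error value function (Theorem \ref{principal1}'s proof uses exactly such a function) but is semantically less natural than the paper's strictly off-diagonal choices. Both proofs obtain the open family the same way, from strictness and the linearity of the relevant quantities in the transition probabilities, using the fact that NN decision regions do not depend on the channel.
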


\begin{proof}
Because we are assuming $a_{1}\neq a_{2}\,$, there is a code $C\subseteq
\mathcal{X}$ and $y_{0}\in\mathcal{X}\setminus C$ such that
\[
c_{1}:=a_{1}\left(  y_{0}\right)  \neq a_{2}\left(  y_{0}\right)  :=c_{2}%
\]
Because we are considering NN-decoders determined by metrics, say $d_{1}$ and
$d_{2}$, we have that
\[
d_{1}\left(  y_{0},c_{1}\right)     \leq d_{1}\left(  y_{0},c\right)  \text{ and }
d_{2}\left(  y_{0},c_{2}\right)    \leq d_{2}\left(  y_{0},c\right)  ,
\]
for all $c\in C$, and in particular
\[
d_{1}\left(  y_{0},c_{1}\right)     \leq d_{1}\left(  y_{0},c_{2}\right) \text{ and }
d_{2}\left(  y_{0},c_{2}\right)     \leq d_{2}\left(  y_{0},c_{1}\right)  ,
\]
hence we may assume, without loss of generality, that $C=\left\{  c_{1},c_{2}\right\} $

and thus our information set $\mathcal{I}=\left\{  \iota_{1},\iota_{2}\right\}  $
has only two elements. We consider the two possible encoders $f_{1}%
,f_{2}:\mathcal{I\rightarrow C}$ by%
\[
f_{1}\left(  \iota_{1}\right)  =c_{1},f_{1}\left(  \iota_{2}\right)  =c_{2}%
\text{ and }f_{2}\left(  \iota_{1}\right)  =c_{2},f_{2}\left(  \iota_{2}\right)
=c_{1}\text{.}%
\]

Let us consider the induced error value functions%
\[
\nu_{1}\left(  c,c^{\prime}\right)  =\left\{
\begin{array}
[c]{cc}%
\delta_{1} & \text{ if }\left(  c,c^{\prime}\right)  =\left(  c_{1}%
,c_{2}\right) \\
0 & \text{ otherwise}%
\end{array}
\right.
\]
and
\[
\nu_{2}\left(  c,c^{\prime}\right)  =\left\{
\begin{array}
[c]{cc}%
\delta_{2} & \text{ if }\left(  c,c^{\prime}\right)  =\left(  c_{2}%
,c_{1}\right) \\
0 & \text{ otherwise}%
\end{array}
\right.
\]

Let us define%
\[
V_{i}^{j}=\left\{  y\in\mathcal{X}:a_{j}\left(  y\right)  =c_{i}\right\}  ,
\]
$i,j\in\left\{  1,2\right\}  $. With this notation, for $i\neq j$, we have
that
\begin{align*}
\mathbb{E}\left(  C,a_{j},\nu_{i}\right)   &  =\sum_{y\in\mathcal{X}^{n}}%
\sum_{c\in C}\nu_{i}\left(  c,a_{j}\left(  y\right)  \right)  P\left(
y|c\right) \\
&  =\sum_{y\in\mathcal{X}^{n}}\nu_{i}\left(  c_{i},a_{j}\left(  y\right)
\right)  P\left(  y|c_{i}\right) \\
&  =\sum_{y\in V_{j}^{j}}\delta_{i}\cdot P\left(  y|c_{i}\right) 
\end{align*}
and
\begin{align*}
\mathbb{E}\left(  C,a_{i},\nu_{i}\right)   &  =\sum_{y\in\mathcal{X}^{n}}%
\sum_{c\in C}\nu_{i}\left(  c,a_{i}\left(  y\right)  \right)  P\left(
y|c\right) \\
&  =\sum_{y\in\mathcal{X}^{n}}\nu_{i}\left(  c_{i},a_{i}\left(  y\right)
\right)  P\left(  y|c_{i}\right) \\
&  =\sum_{y\in V_{j}^{i}}\delta_{i}\cdot P\left(  y|c_{i}\right)  \text{.}%
\end{align*}
Considering the difference  between the expected loss functions we find that%
\begin{align}
\mathbb{E}\left(  C,a_{1},\nu_{1}\right)  -\mathbb{E}\left(  C,a_{2},\nu
_{1}\right) & =
 \delta_{1}\left(  \sum_{y\in V_{2}^{1}}P\left(
y|c_{1}\right)  -\sum_{y\in V_{2}^{2}}P\left(  y|c_{1}\right)  \right) \\
 & =  \delta_{1}\left(  \sum_{y\in V_{2}^{1}\backslash V_{2}^{2}}P\left(
y|c_{1}\right)  -\sum_{y\in V_{2}^{2}\backslash V_{2}^{1}}P\left(
y|c_{1}\right)  \right)
\label{v1}
\end{align}
and, similarly,
\begin{align}
\mathbb{E}\left(  C,a_{1},\nu_{2}\right)  -\mathbb{E}\left(  C,a_{2},\nu
_{2}\right) =
\delta_{2}\left(  \sum_{y\in V_{1}^{1}\backslash V_{1}^{2}%
}P\left(  y|c_{2}\right)  -\sum_{y\in V_{1}^{2}\backslash V_{1}^{1}}P\left(
y|c_{2}\right)  \right)  \text{.} \label{v2}%
\end{align}

Considering $i\neq j$, we remark now that%
\begin{align*}
V_{j}^{i}\backslash V_{j}^{j}  &  =\left\{  y\in\mathcal{X}:a_{i}\left(
y\right)  =c_{j},a_{j}\left(  y\right)  \neq c_{j}\right\} \\
&  =\left\{  y\in\mathcal{X}:a_{i}\left(  y\right)  =c_{j},a_{j}\left(
y\right)  =c_{i}\right\} \\
&  =\left\{  y\in\mathcal{X}:a_{i}\left(  y\right)  \neq c_{i},a_{j}\left(
y\right)  =c_{i}\right\}  =V_{i}^{j}\backslash V_{i}^{i}%
\end{align*}
and%
\begin{align*}
V_{i}^{i}\backslash V_{i}^{j}  &  =\left\{  y\in\mathcal{X}:a_{i}\left(
y\right)  =c_{i},a_{j}\left(  y\right)  \neq c_{i}\right\} \\
&  =\left\{  y\in\mathcal{X}:a_{i}\left(  y\right)  =c_{i},a_{j}\left(
y\right)  =c_{j}\right\} \\
&  =\left\{  y\in\mathcal{X}:a_{i}\left(  y\right)  \neq c_{j},a_{j}\left(
y\right)  =c_{j}\right\}  =V_{j}^{j}\backslash V_{j}^{i},
\end{align*}
along equations (\ref{v1}) and (\ref{v2}), can be represented as%
\[
\mathbb{E}\left(  C,a_{1},\nu_{1}\right)  -\mathbb{E}\left(  C,a_{2},\nu
_{1}\right)  =
\delta_{1}\left(  \sum_{y\in V_{2}^{1}\backslash V_{2}^{2}%
}P\left(  y|c_{1}\right)  -\sum_{y\in V_{2}^{2}\backslash V_{2}^{1}}P\left(
y|c_{1}\right)  \right)  \label{v3}%
\]
and
\[
\mathbb{E}\left(  C,a_{1},\nu_{2}\right)  -\mathbb{E}\left(  C,a_{2},\nu
_{2}\right)  =
\delta_{2}\left(  \sum_{y\in V_{2}^{2}\backslash V_{2}^{1}%
}P\left(  y|c_{2}\right)  -\sum_{y\in V_{2}^{1}\backslash V_{2}^{2}}P\left(
y|c_{2}\right)  \right)  \text{.}%
\]

For simplicity, let us write
\[
V_{1}=V_{2}^{1}\backslash V_{2}^{2},V_{2}=V_{2}^{2}\backslash V_{2}%
^{1}\text{.}%
\]
Thus, we have that%
\[
\mathbb{E}\left(  C,a_{1},\nu_{1}\right)  -\mathbb{E}\left(  C,a_{2},\nu
_{1}\right)  =
\delta_{1}\left(  \sum_{y\in V_{1}}P\left(  y|c_{1}\right)
-\sum_{y\in V_{2}}P\left(  y|c_{1}\right)  \right)
\]
and
\[
\mathbb{E}\left(  C,a_{1},\nu_{2}\right)  -\mathbb{E}\left(  C,a_{2},\nu
_{2}\right)  =
\delta_{2}\left(  \sum_{y\in V_{2}}P\left(  y|c_{2}\right)
-\sum_{y\in V_{1}}P\left(  y|c_{2}\right)  \right)  \text{.}%
\]
Now, we note that%
\[
\mathbb{E}\left(  C,a_{1},\nu_{1}\right)  -\mathbb{E}\left(  C,a_{2},\nu
_{1}\right)  >0 \text{ and }
\mathbb{E}\left(  C,a_{1},\nu_{2}\right)
-\mathbb{E}\left(  C,a_{2},\nu_{2}\right)  <0
\]
is equivalent to having
\begin{equation}
\sum_{y\in V_{1}}P\left(  y|c_{1}\right)  >\sum_{y\in V_{2}}P\left(
y|c_{1}\right) \label{v6.1} \text{ and }
\sum_{y\in V_{1}}P\left(  y|c_{2}\right)
>\sum_{y\in V_{2}}P\left(  y|c_{2}\right)  . 
\end{equation}
Because $V_{1}\cap V_{2}=\varnothing$, there is a channel $P_{0}=P_{\mathcal{X}%
,\mathcal{X}}^{0}$, and there are$\ y\in\mathcal{X}$ and $c_{1},c_{2}\in C$,
satisfying the inequalities in (\ref{v6.1}). Because this is a strict inequality, it will also be satisfied
for any channel $P$ sufficiently close to $P_{0}$.
\end{proof}

\begin{corollary}
\label{cor}Let $C\subset\mathcal{X}$ be a code, $f:\mathcal{I}\rightarrow C$
an encoder and $a,b:\mathcal{X}\rightarrow C$ two distinct NN-decoders
determined respectively by metrics $d_{1}$ and $d_{2}$. Suppose there are
$c_{1},c_{2}\in C$ such that
\[
\sum_{y\in V_{1}}P\left(  y|c_{1}\right)  >\sum_{y\in V_{2}}P\left(
y|c_{1}\right) \text{ and }
\sum_{y\in V_{1}}P\left(  y|c_{2}\right)
>\sum_{y\in V_{2}}P\left(  y|c_{2}\right)  ,
\]
where
\[
V_{1}=\left\{  y\in\mathcal{X}:a\left(  y\right)  =c_{2},b\left(  y\right)
=c_{1}\right\}\text{ and }
V_{2}=\left\{  y\in\mathcal{X}:a\left(  y\right)  =c_{1},b\left(  y\right)
=c_{2}\right\}  \text{.}%
\]
Then, there are error value functions $\nu_{f}$ and $\widetilde{\nu}_{f}$ such
that
\[
\mathbb{E}\left(  C,a,\nu_{f}\right)  >\mathbb{E}\left(  C,b,\nu_{f}\right)
\text{ and }\mathbb{E}\left(  C,a,\widetilde{\nu}_{f}\right)  <\mathbb{E}\left(
C,b,\widetilde{\nu}_{f}\right)  .
\]

\end{corollary}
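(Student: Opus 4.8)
The plan is to reuse, almost verbatim, the construction from the proof of the preceding theorem, now specialized to $a_1=a$, $a_2=b$ and with the channel and code already fixed by hypothesis. First I would define the two error value functions directly on $C\times C$ by
\[
\nu_f(c,c')=\begin{cases}\delta_1 & \text{if } (c,c')=(c_1,c_2)\\ 0 & \text{otherwise}\end{cases}
\qquad
\widetilde{\nu}_f(c,c')=\begin{cases}\delta_2 & \text{if } (c,c')=(c_2,c_1)\\ 0 & \text{otherwise}\end{cases}
\]
with $\delta_1,\delta_2>0$; pulling these back through $f$ produces the required error value functions on $\mathcal{I}\times\mathcal{I}$. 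Since each of $\nu_f$ and $\widetilde{\nu}_f$ is supported on a single ordered pair, only the true codeword $c_1$ (respectively $c_2$) contributes to the corresponding expected loss, so the two objectives decouple cleanly.

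Next I would expand $\mathbb{E}(C,a,\nu_f)-\mathbb{E}(C,b,\nu_f)$ using (\ref{perda esperada}) together with the uniform prior $P(c)=1/M$. Every $y$ with $a(y)=b(y)$ contributes the same amount to both terms and cancels, so the difference collects only the $y$ on which the two decoders disagree; evaluating $\nu_f$ there gives
\[
\mathbb{E}(C,a,\nu_f)-\mathbb{E}(C,b,\nu_f)=\frac{\delta_1}{M}\left(\sum_{y\in V_1}P(y|c_1)-\sum_{y\in V_2}P(y|c_1)\right),
\]
and symmetrically
\[
\mathbb{E}(C,a,\widetilde{\nu}_f)-\mathbb{E}(C,b,\widetilde{\nu}_f)=\frac{\delta_2}{M}\left(\sum_{y\in V_2}P(y|c_2)-\sum_{y\in V_1}P(y|c_2)\right).
\]
The two assumed inequalities are precisely the conditions (\ref{v6.1}) isolated in the preceding theorem, so they force the first difference to be strictly positive and the second strictly negative, which is exactly the desired pair of relations.

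The step I expect to be the crux is the passage that turns the raw decision-set differences $\{y:a(y)=c_2\}\setminus\{y:b(y)=c_2\}$ (and its mirror image for $\widetilde{\nu}_f$) into the sets $V_1$ and $V_2$ written in the statement. This identification is immediate when $C=\{c_1,c_2\}$, since there "$b(y)\neq c_2$" forces "$b(y)=c_1$"; for a larger code one must argue, exactly as in the reduction "without loss of generality $C=\{c_1,c_2\}$" carried out in the preceding theorem, that the contributions localize on the two codewords $c_1$ and $c_2$, so that these decision-set differences coincide with $V_1$ and $V_2$. Once this localization is secured, the rest is the routine bookkeeping displayed above, and the conclusion follows by reading off the signs from the hypotheses.
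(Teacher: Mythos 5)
Your route is the same as the paper's: the paper's entire proof of this corollary is the single line ``Follows from the proof of the preceding theorem,'' and you have reconstructed exactly what that must mean --- take the two single-pair value functions $\nu_1,\nu_2$ from the theorem, expand the expected-loss differences under the uniform prior, and read off the signs from the hypothesized inequalities, which are indeed the conditions isolated as (\ref{v6.1}) there. Your sign bookkeeping and the $1/M$ factor are correct.

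The problem is the step you yourself flag as the crux, which you defer rather than resolve --- and it cannot be resolved the way you suggest. For the value function supported on the single pair $\left(c_1,c_2\right)$, the exact identity is
\[
\mathbb{E}\left(C,a,\nu_f\right)-\mathbb{E}\left(C,b,\nu_f\right)=\frac{\delta_1}{M}\left(\sum_{y\in A\setminus B}P\left(y|c_1\right)-\sum_{y\in B\setminus A}P\left(y|c_1\right)\right),
\]
where $A=a^{-1}\left(c_2\right)$ and $B=b^{-1}\left(c_2\right)$, so $A\setminus B=\left\{y:a\left(y\right)=c_2,\ b\left(y\right)\neq c_2\right\}$ and $B\setminus A=\left\{y:b\left(y\right)=c_2,\ a\left(y\right)\neq c_2\right\}$. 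These properly contain $V_1$ and $V_2$ whenever some $y$ is sent by one decoder to $c_2$ and by the other to a third codeword $c_3\notin\left\{c_1,c_2\right\}$, and the hypotheses of the corollary say nothing about the probability mass on those extra sets; in particular the contribution of $\left\{y:b\left(y\right)=c_2,\ a\left(y\right)\notin\left\{c_1,c_2\right\}\right\}$ enters with a minus sign and can destroy the first inequality. The ``without loss of generality $C=\left\{c_1,c_2\right\}$'' reduction you invoke is not available here: in the theorem the code is constructed as part of the proof, whereas in the corollary $C$ is given data and the conclusion must hold for that $C$. So your displayed formula is valid only under the additional assumption that $a$ and $b$ disagree exclusively on $V_1\cup V_2$ (e.g.\ $\left|C\right|=2$); otherwise you must either strengthen the hypotheses to the larger sets $A\setminus B$, $B\setminus A$, or choose a different value function. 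To be fair, this gap is inherited from the paper, whose statement and one-line proof gloss over exactly the same point.
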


\bigskip

\begin{proof}
Follows from the proof of the preceding theorem.
\end{proof}

\section{Coding and Decoding Schemes for Images \label{ap2}}

For the purpose we are aiming at with this work, we consider a gray palette of
colours, using a scale of $k$ bits. This means that our information set is
$\mathcal{I}_{k}=\left\{  \iota_{0},\iota_{1},\ldots,\iota_{2^{k}-1}\right\}  $,
where $\iota_{r}$, $0\leq\frac{r}{2^{k}-1}\leq1$,  represents the brightness of a
pixel in a scale of gray with $2^{k}$ uniform levels. We let $\mathbb{F}_{2}$
be the finite field with two elements denoted by $0$ and $1$. Each $\iota_j$ may
be represented as a binary vector $x=\left(  x_{0},x_{1},\ldots,x_{k-1}%
\right)  $, with $x_{i}\in\mathbb{F}_{2}$, which represents a color with a
black intensity of%

\[
\frac{x_{0}+x_{1}2+x_{2}2^{2}+\ldots+x_{k-1}2^{k-1}}{2^{k}-1}\text{,}%
\]
where here the $x_{i}$'s are considered to assume the \textit{\textbf{real}} values $0$
or $1$, that is, to the color $\iota_{r}$ with a black intensity of $\frac{r}{2^{k}-1}  $, we associate $x$ such that $r=x_{0}+x_{1}2+x_{2}%
2^{2}+\ldots+x_{k-1}2^{k-1}$. Given such $x,y\in\mathbb{F}_{2}^{k}$, the
\textit{squared error loss value} $\mu_{2}$ is%
\[
\mu_{2}\left(  x,y\right)     =\left[  \frac{\sum_{i=0}^{k}\left(  x_{i}-y_{i}\right)  2^{i}}{2^{k}%
-1}\right]  ^{2}\text{.}%
\]

Let us consider a picture $X$ that has $M\times N$ pixels, let us say
$X=\left(  x^{mn}\right)  $, with $1\leq m\leq M$ and $1\leq n\leq N$, and
suppose that, after the coding-transmitting-decoding process, we get a picture
$Y=\left(  y^{mn}\right)  $. The \textit{mean squared error} (MSE) $\mu
_{\text{{\tiny MSE}}}\left(  X,Y\right)  $ is%

\[
\mu_{\text{{\tiny MSE}}}\left(  X,Y\right)   =\frac{1}{MN}\sum_{m=1}%
^{M}\sum_{n=1}^{N}\mu_{2}\left(  x^{mn},y^{mn}\right)
\]
\[
=\frac{1}{MN}\frac{1}{\left(  2^{k}-1\right)  ^{2}}\sum_{m=1}^{M}\sum
_{n=1}^{N}\left(  \sum_{i=0}^{k}\left(  x_{i}^{mn}-y_{i}^{mn}\right)
2^{i}\right)  ^{2}\text{.}%
\]

This is a very simple measure of images distortion, which is considered to be
an appropriate measure of the fidelity of images when errors are produced by a
Gaussian noise, the same type of noise that gives rise to a symmetric channel.
Thus, from here on, we assume that transmission is made over a \emph{binary
memoryless symmetric channel} (BMSC) and that $\mu=\mu_{\text{{\tiny MSE}}}$.

We proceed now to present our proposed coding-decoding scheme, considering transmission of
images over a BMSC with overall error probability $p$. We consider linear
block-codes, so that the information set $\mathcal{I}_{k}$ has $2^{k}%
=\left\vert \mathcal{I}_{k}\right\vert $ elements that are encoded considering a
linear code $C\subseteq\mathbb{F}_{2}^{n}$, for some $n\geq k$. This is
still an initial approach where each codeword represents a pixel but it fits
into the concept of \emph{ultra-small block codes} as explored in \cite{Ning},
suitable for situations with strong constraints on block length (see, for
example, the introduction in \cite{Moser}).

We present our heuristic proposal in two parts, considering first the encoding
and then the decoding.

\subsection{Encoding\label{encoding}}

Given an $\left[  n;k\right]  _{q}$\textit{ linear code }$C$, a $k$-dimensional linear space of $\mathbb{F}_{q}^{n}$, and
assuming a decoder $a:\mathbb{F}_{q}^{n}\rightarrow C$ is given, we are
concerned with the choice of an encoder $f:\mathcal{I}\rightarrow
C$. If we fix such an encoder $f:\mathcal{I}\rightarrow C$, we are actually distinguishing
between $f$ and $f\circ\sigma$, where $\sigma:\mathcal{I}\rightarrow
\mathcal{I}$ is any permutation of the information set. In this sense,
we may say we are making a joint source-channel coding (JSCC), in the same
sense adopted for instance, in \cite{Ho-Kahn}, where some quantized information
is more relevant than others.

Approaching the encoding problem, first of all, we consider the peculiar (in
the coding context) situation where \textit{\textbf{no}} redundancy is added to the
system, that is, we are considering a $\left[  k;k\right]  _{q}$ linear code
$C$. Under this circumstance, and considering that the channel is \textit{reasonable},
in the sense that, for any $c\in C=\mathbb{F}_{q}^{k}$, $P\left(  \left.
c\right\vert c\right)  >P\left(  \left.  c\right\vert y\right)  $ for
any $y\in C$, $y\neq c$, we have that the unique ML decoder is the trivial
decoder: $a\left(  y\right)  =y$ for any $y\in C$. We identify the information
set $\mathcal{I}$ with $C$, and so, an encoder is just a permutation
$\sigma$ of $C$. Considering such permutation, we have that
\[
\mu_{0\text{-}1}\left(  c,c^{\prime}\right)  =\mu_{0\text{-}1}\left(
\sigma\left(  c\right)  ,\sigma\left(  c^{\prime}\right)  \right)
\]
for all $c,c^{\prime} \in C$, and hence, the error probability does not depend on the encoder. If, instead
of $\mu_{0\text{-}1}$, we consider a value function $\mu$ such that
\[
\mu\left(  c,c^{\prime}\right)  \neq\mu\left(  c,c^{\prime\prime
}\right)
\]
for some distinct $c,c^{\prime},c^{\prime\prime} \in C$ (as is such the
square error loss value $\mu_{2}$) then, if we exchange $c^{\prime}$ by
$c^{\prime\prime}$, the expected loss is affected.

To put it shortly: even in the most trivial case that can be considered, using
a code with no redundancy and no error correction or detection, better results
may be attained when the semantic value of errors is taken into consideration.

As a toy example, let the information set $\mathcal{I}_4$  consist of
$16=2^{4}$ different gray-scale tones $\left\{  \iota_{0},\ldots,\iota
_{15}\right\}  $. We consider two different encoders, $f,g:\mathcal{I}_{4}%
\rightarrow\mathbb{F}_{2}^{4}$. Encoder $f$ is determined by a
reflex-and-prefix algorithm used for producing \textit{Gray encoders}: it has the
property $d_H\left(  f\left(  \iota_{j}\right)  ,f\left(  \iota_{j+1}\right)
\right)  =1$, where $d_H\left(  \cdot,\cdot\right)  \,\ $is the usual \textit{Hamming
metric}. Encoder $g$ is determined by the lexicographic order: if we write
$g\left(  \iota_{j}\right)  =\left(  x_{1},x_{2},x_{3},x_{4}\right)  $, with
$x_{i}=0,1$, we have that
\[
j-1=x_{1}2^{0}+x_{2}2^{1}+x_{3}2^{2}+x_{4}2^{3}\text{.}%
\]
We consider as an original message the tamarins' picture in Figure
\ref{mico}.

Using a random number generator, we simulate a BMSC with overall bit error
probability $p=0.2$, and we get the two different \textquotedblleft decoded \textquotedblright messages, shown in
Figures \ref{fig:gray} and \ref{fig:lexi}. Since the overall error
probability $p$ is very high, we find that each picture has approximately
$0.4234$ \% of the pixels having a wrong color, and both pictures are
poor in quality. Nevertheless, the result using a lexicographic encoder is
clearly perceived to be better.

\begin{figure}[!h]
\centering
\begin{minipage}[b]{0.47\linewidth}
\includegraphics[width=\linewidth]{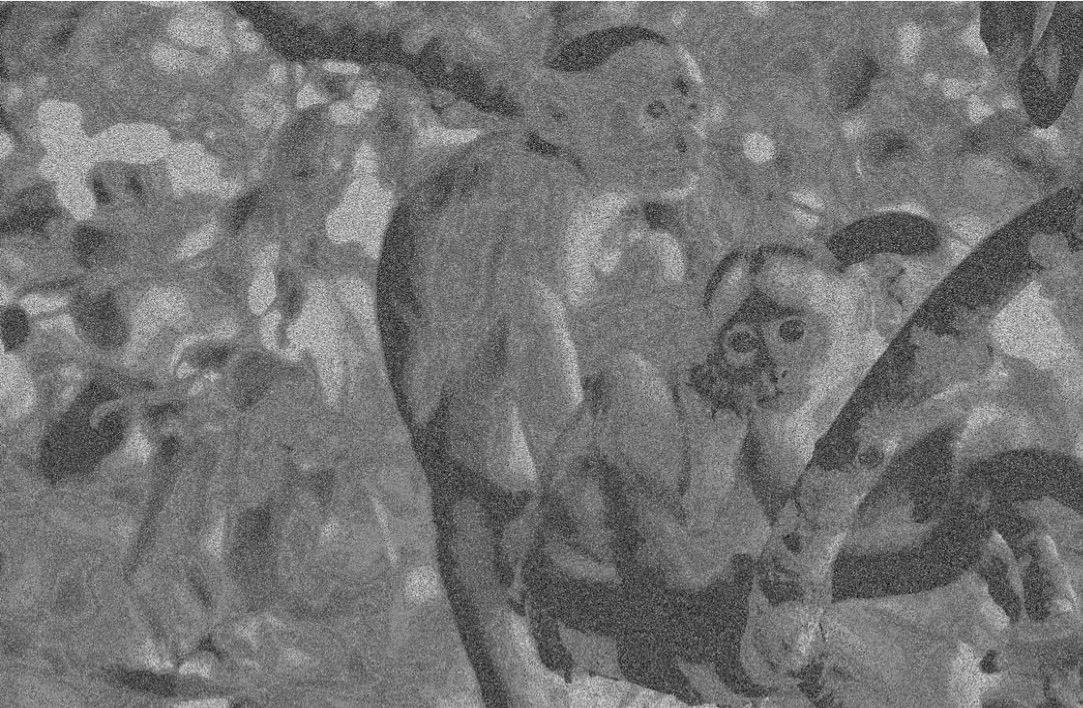}
\caption{Encoded using a reflex-and-prefix algorithm; $p=0.2$.}
\label{fig:gray}
\end{minipage} \hfill
\begin{minipage}[b]{0.47\linewidth}
\includegraphics[width=\linewidth]{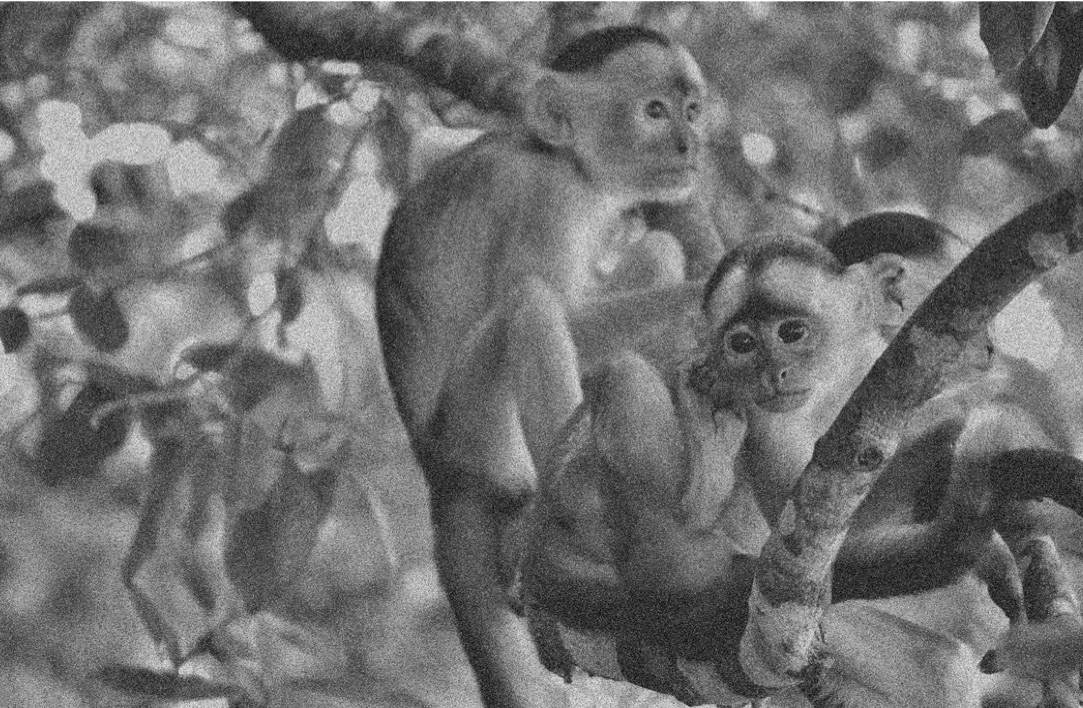}
\caption{Encoded using a lexicographic order; $p=0.2$.}
\label{fig:lexi}
\end{minipage}
\end{figure}

Simulations show that, in this situation, the lexicographic encoder seems to be
optimal. In Figure \ref{fig:lex_n=k=4}, we consider the
case $n=k=4$ and, in Figure \ref{fig:lex_n=k=8}, we consider the
case $n=k=8$. On both graphs, the red
lines represent the normalized expected loss attained by the lexicographic
encoder (as a function of the overall error probability $p$). The other values
correspond to different encoders, sampled randomly for different values of $p$.

\begin{figure}[!h]
\centering
\begin{minipage}[b]{0.47\linewidth}
\includegraphics[width=\linewidth]{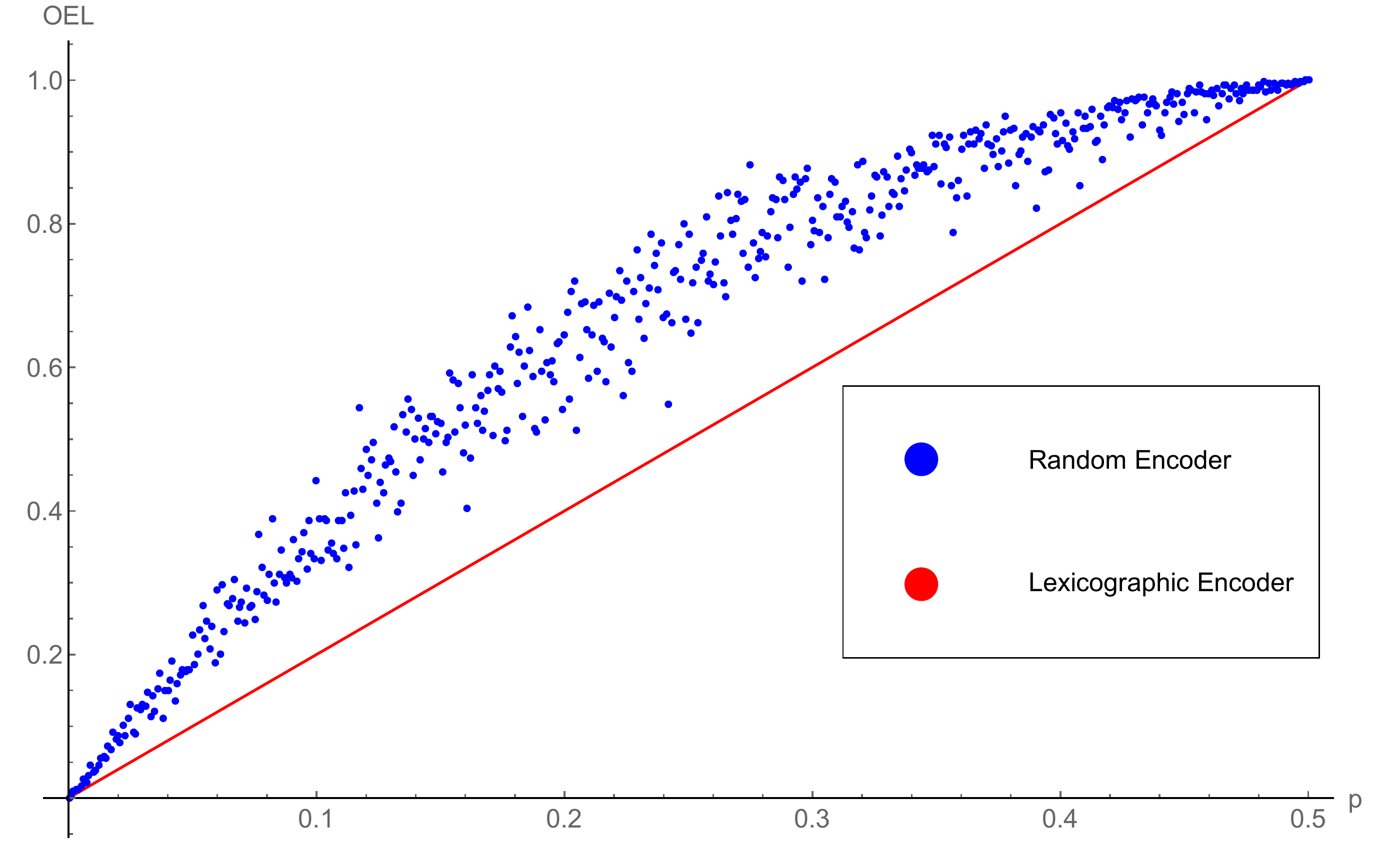}
\caption{$n=k=4$}
\label{fig:lex_n=k=4}
\end{minipage} \hfill
\begin{minipage}[b]{0.47\linewidth}
\includegraphics[width=\linewidth]{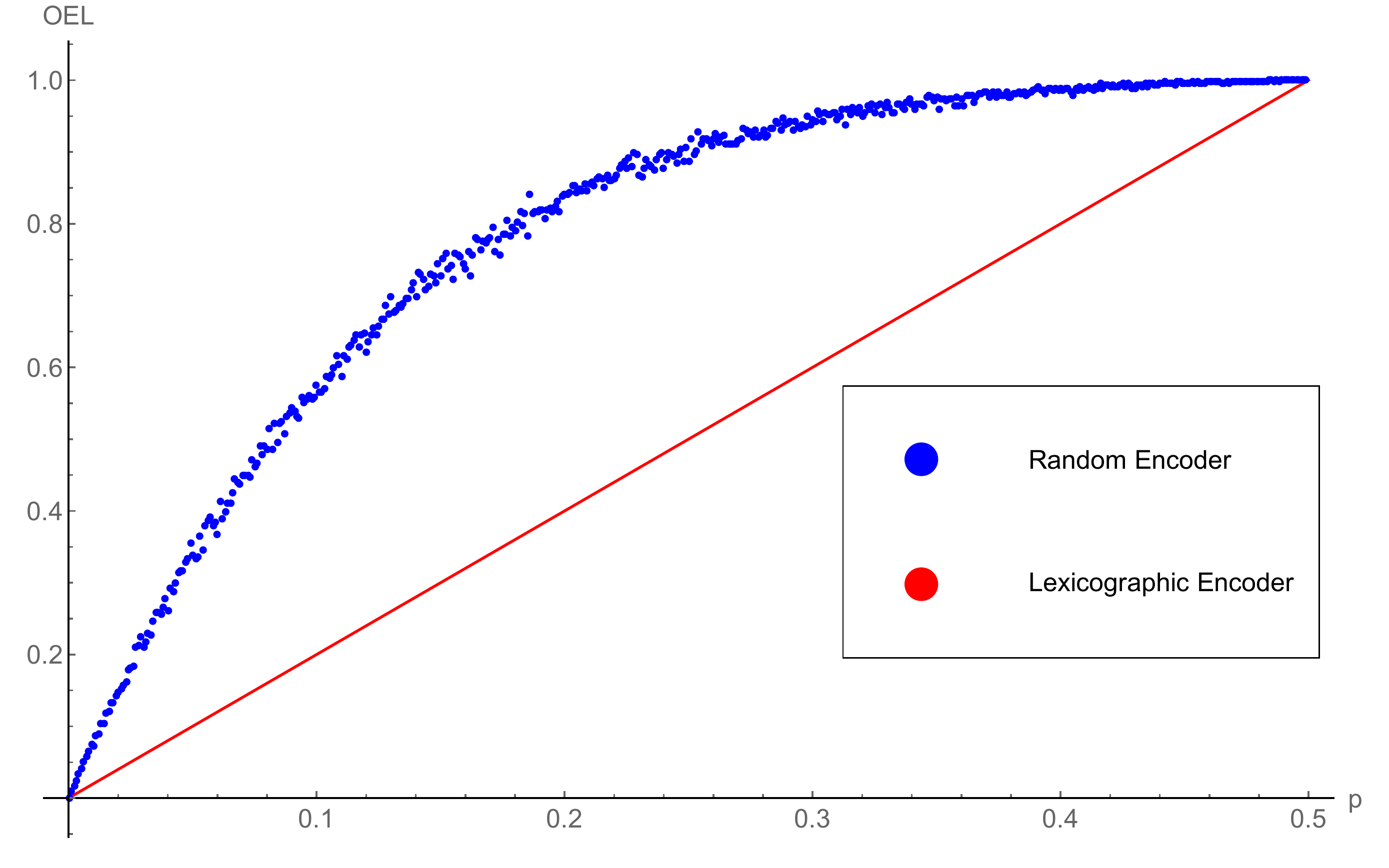}
\caption{$n=k=8$}
\label{fig:lex_n=k=8}
\end{minipage}
\end{figure}

We remark that, when $n=k$ increases, the difference in the performance
between the lexicographic encoder and a random encoder (in the average) also increases.

We generalize this approach when considering an information set $\mathcal{I}$
with $\left\vert \mathcal{I}\right\vert =2^{k}$ elements and a proper code
$C$ in $\mathbb{\mathbb{F}}_{2}^{n}$ with $n\geq k$. We say an encoder
$f:\mathcal{I}\rightarrow C$ is a \emph{lexicographic encoder} if it satisfies
the following condition: given $\iota_{j},\iota_{l}\in\mathcal{I}$, if
$f\left(  \iota_{j}\right)  =x=\left(  x_{1},\ldots,x_{n}\right)  $ and $f\left(
\iota_{l}\right)  =y=\left(  y_{1},\ldots,y_{n}\right)$, then
\[
j<l\iff\sum_{i=1}^{n}x_{i}2^{i-1}<\sum_{i=1}^{n}y_{i}2^{i-1}\text{.}%
\]

We cannot prove that this is indeed a Bayes encoder, but experimental
evidence supports this conjecture not only for the maximal likelihood (ML)
decoder for the BMSC but also for various other encoders. Let us consider the
situation described in our toy example where we have $2^{4}$ different tones
of gray and let us consider an $\left[  7;4\right]  _{2}$ linear code $C$. We
consider a decoder $a:\mathbb{F}_{2}^{7}\rightarrow C$ and compute the
normalized expected loss function for different encoders and different values
of $p$. On each of the pictures in Figures \ref{fig:HammingML}, \ref{fig:HammingT}, \ref{fig:EhammingML} and \ref{fig:EhammingT}, we have that:

\begin{itemize}
\item The red line represents the lexicographic encoder;

\item The black line represents an encoder  $f:\mathcal{I} \rightarrow C$ such
that, for $i\leq j$, we have $w_{H}\left(  f\left(  \iota_{i}\right)  \right)
\leq w_{H}\left(  f\left(  \iota_{j}\right)  \right)  $. Here, $w_{H}\left(
\cdot\right)  $ is the usual \textit{Hamming weight} of a vector. Such an encoder is
called a \emph{weight priority encoder} and, its role will be explained by Theorem \ref{teo. cod. bayes} in
Section \ref{translation} ;

\item The blue line represents an average (taken over all of the possible encoders)
expected loss;

\item The points represent random sampling of encoders for different values of
$p$.
\end{itemize}

We consider two different $\left[  7;4\right]  _{2}$ codes $\mathcal{H}\left(
3\right)  $ and $\mathcal{C}\left(  3\right)  $ and two different decoders, 
$a_{ML}$ and $a_{T}$, such that each of those four pictures represents a pair
$\left(  C,a\right)  =\left(  \text{code,decoder}\right)  $. The codes and
decoders are the following:

\begin{itemize}
\item $\mathcal{H}\left(  3\right)  $ is the Hamming code with a parity check
matrix
\[
\left(
\begin{array}
[c]{ccccccc}%
1 & 0 & 1 & 0 & 1 & 0 & 1\\
0 & 1 & 1 & 0 & 0 & 1 & 1\\
0 & 0 & 0 & 1 & 1 & 1 & 1
\end{array}
\right)  \text{;}%
\]

\item $\mathcal{C}\left(  3\right)  $ is the code with a generator matrix%
\[
\left(
\begin{array}
[c]{ccccccc}%
1 & 0 & 0 & 0 & 1 & 0 & 1\\
0 & 1 & 0 & 0 & 0 & 1 & 1\\
0 & 0 & 1 & 0 & 1 & 1 & 0\\
0 & 0 & 0 & 1 & 0 & 0 & 1
\end{array}
\right);
\]

\item $a_{ML}$ is the maximal likelihood decoder or the nearest (relative to
the Hamming distance $d_{H}$) neighbor decoder;

\item \label{ordem total}$a_{T}$ is the nearest neighbor decoder relatively to
the metric $d_{T}\left(  x,y\right)  :=\max\left\{  i:x_{i}\neq y_{i}\right\}
$, called the \emph{total-order decoder} or just \emph{ordered decoder} (more details about this decoder are explained in the next section).
\end{itemize}

\begin{figure}[!h]
\centering
\begin{minipage}[b]{0.47\linewidth}
\includegraphics[width=\linewidth]{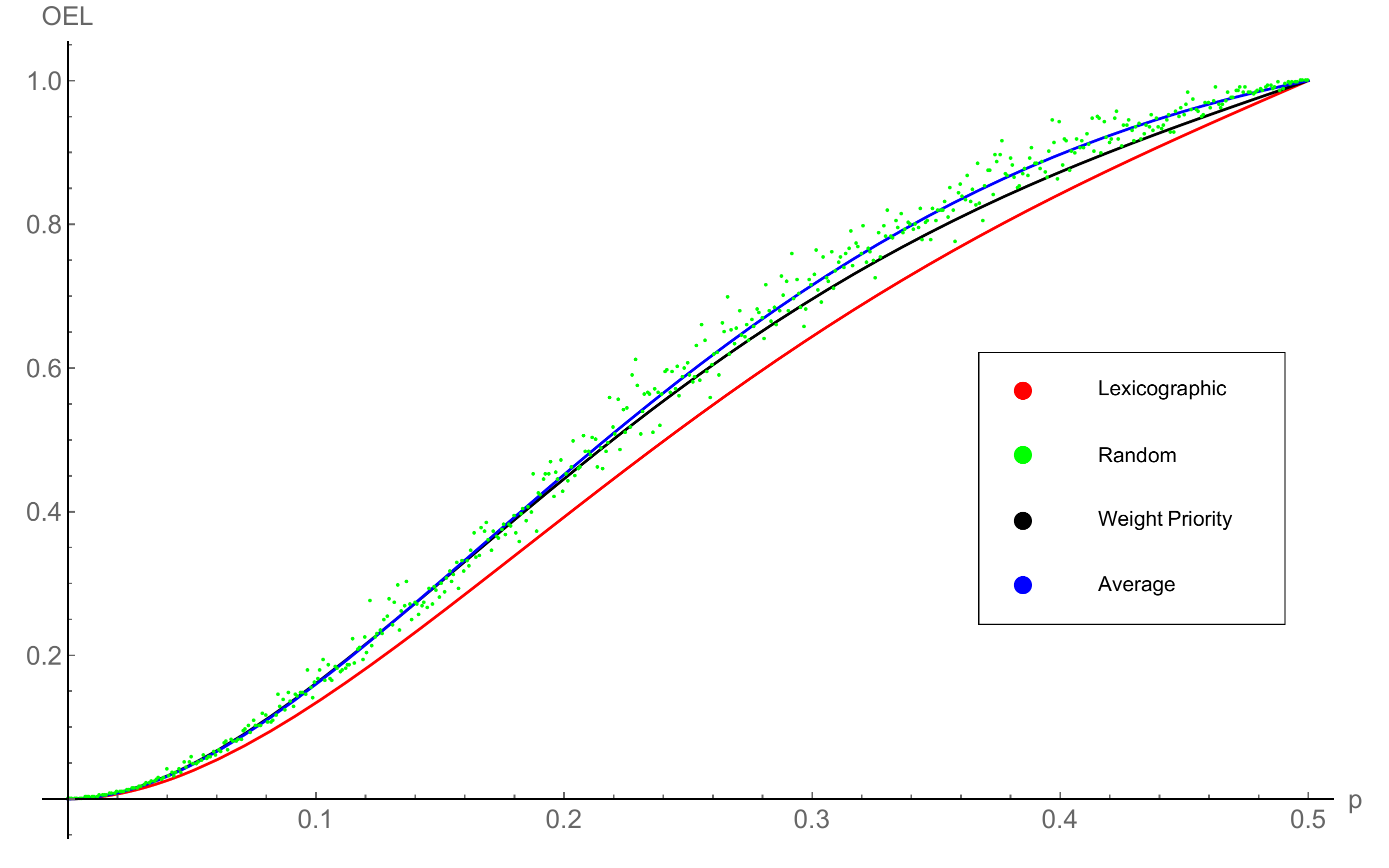}
\caption{$(\mathcal{H}(3),a_{ML})$}
\label{fig:HammingML}
\end{minipage} \hfill
\begin{minipage}[b]{0.47\linewidth}
\includegraphics[width=\linewidth]{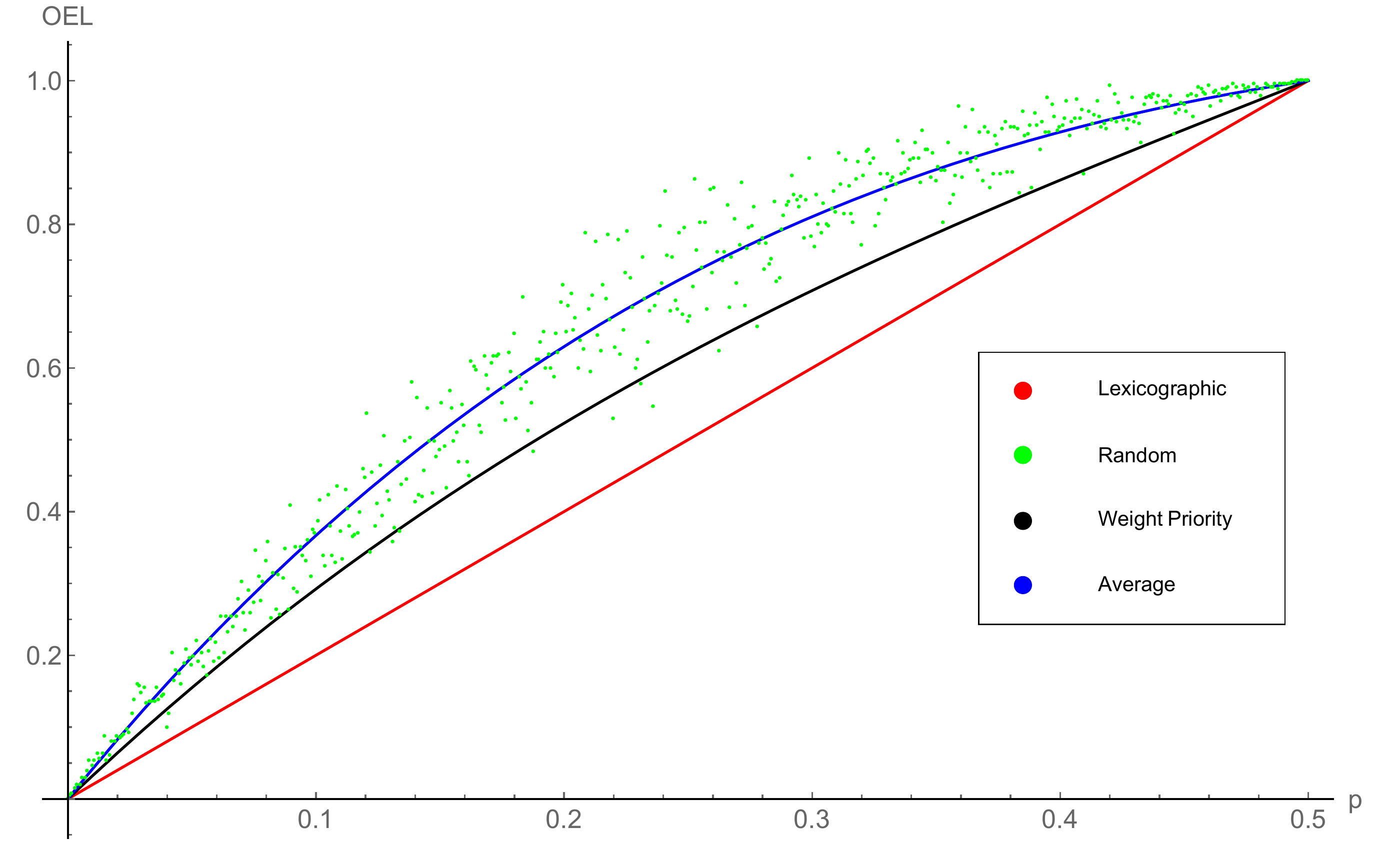}
\caption{$(\mathcal{H}(3),a_{T})$}
\label{fig:HammingT}
\end{minipage}
\end{figure}

\begin{figure}[!h]
\centering
\begin{minipage}[b]{0.47\linewidth}
\includegraphics[width=\linewidth]{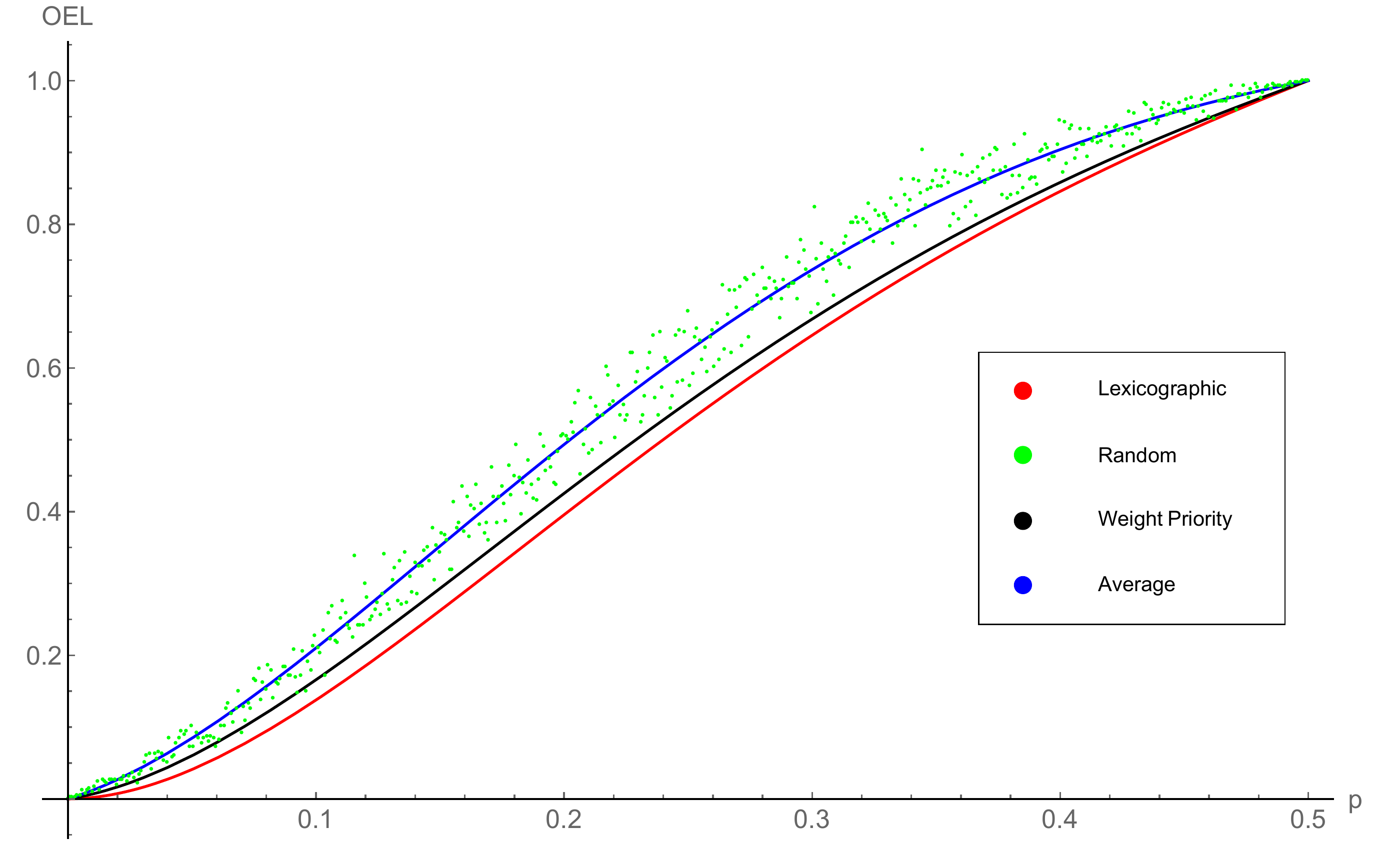}
\caption{$(\mathcal{C}(3),a_{ML})$}
\label{fig:EhammingML}
\end{minipage} \hfill
\begin{minipage}[b]{0.47\linewidth}
\includegraphics[width=\linewidth]{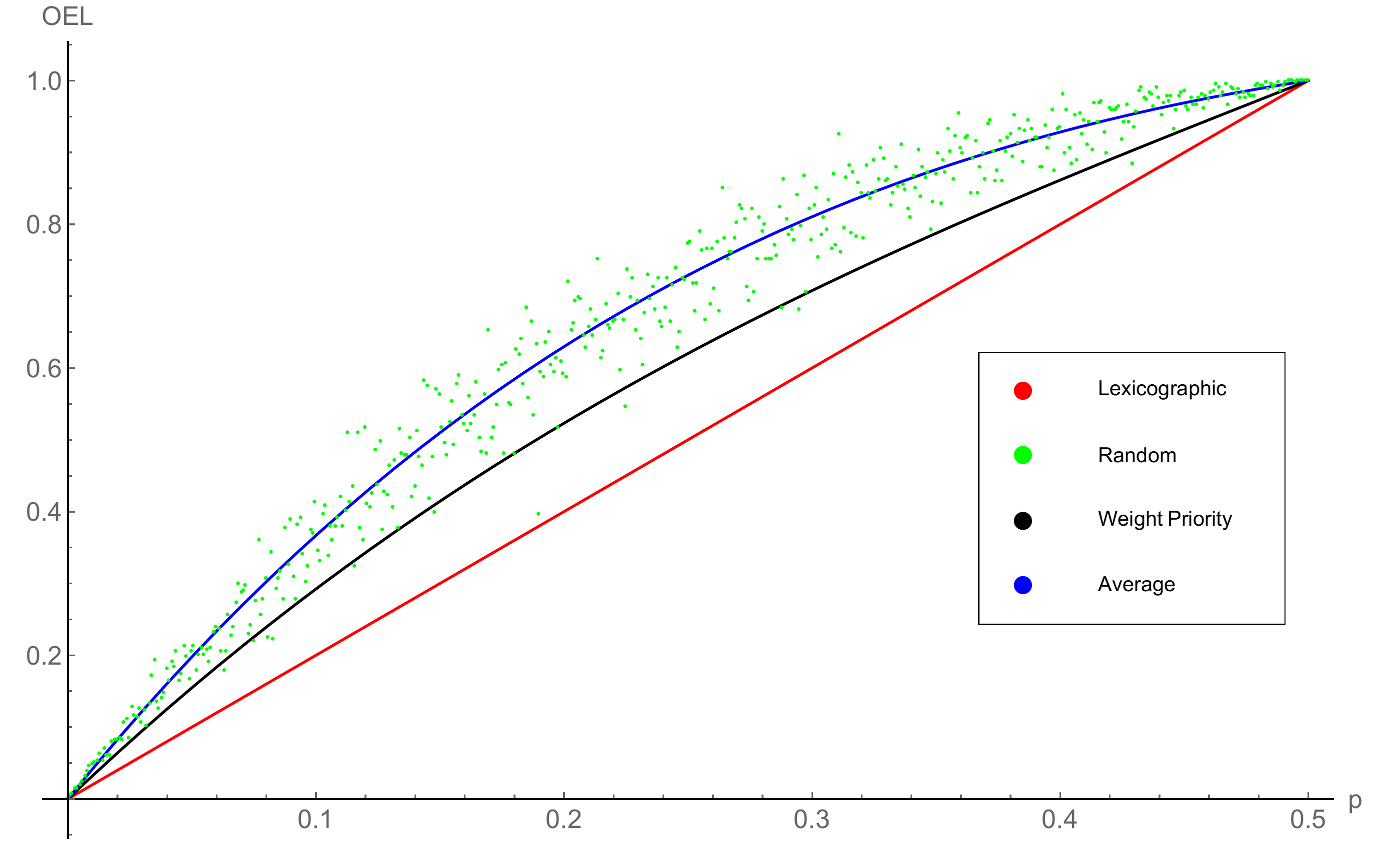}
\caption{$(\mathcal{C}(3),a_{T})$}
\label{fig:EhammingT}
\end{minipage}
\end{figure}

Those simulations support the conjecture that, for those decoders, the
lexicographic encoder is a Bayes encoder.

\subsection{Decoding}

For the encoding part of the problem, we presented the lexicographic encoder as
a candidate to be a Bayes encoder for a pair $\left(  C,a\right)  $, where $C$
is a linear code and $a$ is either the ML-decoder $a_{ML}$ or the decoder
$a_{T}$ defined to be an NN-decoder (according to some metric quite different
from the Hamming one).

For the decoding part of the problem, the situation is more blurry.

The approach adopted in this work is somewhat in the same direction that has
been followed in various recent works regarding unequal error protection (UEP). The proposed use of
 nearest-neighbor decoders determined by a family of ordered
metrics (that will be introduced on the sequence) is actually a decoding
process that gives UEP of bits (bit-wise UEP), in a similar manner to that proposed in
1967 by Masnick and Wolf in \cite{MW} and since then extensively studied by
many authors. Alternately, considering UEP of messages (message-wise
UEP) is the approach adopted by Borade, Nakiboglu and Zheng in \cite{BNZ},
where they consider the necessity of protecting in different ways pieces of information
that are different in their nature (such as data and control messages) or
have different types of errors (erasures and mis-decoded messages). This is
performed by assigning larger decoding regions to the more valuable information.

Our approach is more general, and, in some sense, it combines message-wise and
bit-wise UEP. We protect the messages by placing (encoding) similar (in the
semantic sense) information messages close to each other and by adopting a
decoding criterion that gives priority to the most significant bits.

We consider here two different metrics over $\mathbb{F}_{q}^{n}$, the usual Hamming distance $d_H$ and the total-order metric $d_T\left(x,y\right)=\max\left\{i:x_i\neq y_i\right\}$. Those two metrics are particular instances of the so-called hierarchical poset metrics (see, for example, \cite{PFKH} or \cite{Felix} for an introduction to the subject) and when we do not need to distinguish between them, we may denote the metric as just $d_P$.
As any metric, the
metric $d_{P}$ determines a nearest neighbor (NN) decoder
$a_{P}$: once a message $y$ is received $a_{P}\left(  y\right)  $ is chosen to
minimize the distance to the code, that is, $a_{P}\left(  y\right)  \in
\arg\min\left\{  d_{P}\left(  y,c\right)  :c\in C\right\}  $. In the case of
ambiguity (when $\left\vert \arg\min\left\{  d_{P}\left(  y,c\right)  :c\in
C\right\}  \right\vert \geq2$), we assume the elements in $\arg\min\left\{
d_{P}\left(  y,c\right)  :c\in C\right\}  $ are chosen randomly, with i.i.d. Thus, out of those two metrics, $d_H$ and $d_T$, we consider two different decoders $a_H$ and $a_T$. We remark that this definition of a decoder is actually a list-decoding type definition, and it coincides with the one presented in Section \ref{back} only when $\left\vert \arg\min\left\{  d_{P}\left(  y,c\right)  :c\in
C\right\}  \right\vert =1$. When such an ambiguity exists, by considering an expected loss function, we will  actually  be considering the average (over all of the ambiguities) of the corresponding expected loss functions.

The idea of using hierarchical poset metrics lies in the fact that those
metrics are matched to a lexicographic encoder, in the sense that it gives
more protection against errors in the bits that become more significant due to
the lexicographical manner of encoding. By using a decoder that is not ML, the
number of errors (after decodification) increases, but not uniformly; less
significant errors may increase a greatly, but more significant errors should decrease.

The main question is therefore the following:\textit{ Is there a threshold where the
loss of having more errors is compensated by the reduction in the most significant
errors?} We do not give a conclusive answer to this question but present some
experimental evidence that shows it is indeed the case.

We consider the same toy example as in the encoding part, that is, we consider
each pixel to be attributed a color chosen from a palette with $16$ gray
tones. To explore the decoding side of the problem, we add redundancy by
encoding each pixel as a codeword in the (perfect) $\left[  7;4\right]  _{2}$
binary Hamming code, one codeword assigned for each color that represents a
pixel. Because the lexicographic encoder seems to be the Bayes encoder for
both $a_{ML}$ and $a_{P}$, we consider it as fixed and start to compare
decoders.

Using a random number generator, an error was created for each of the seven
bits of each pixel, with an overall error probability of $p=0.35$. The same received picture
was corrected twice, once using the usual ML decoder and once using an NN-decoder
$a_{T}$, determined by the metric $d_T$.

To illustrate the performance of those decoders, we consider the
same tamarins picture (Figure \ref{mico}) as the original
message. In Figure 10, all of the pixels that were correctly
decoded are painted in purple, while the wrongly decoded pixels
are left as decoded. On the left side, we see the result for the ML
decoder, and on the right side, the result for the NN-decoder $a_T$.

\begin{figure}[!h]
\centering
\begin{minipage}[b]{1\linewidth}
\includegraphics[width=\linewidth]{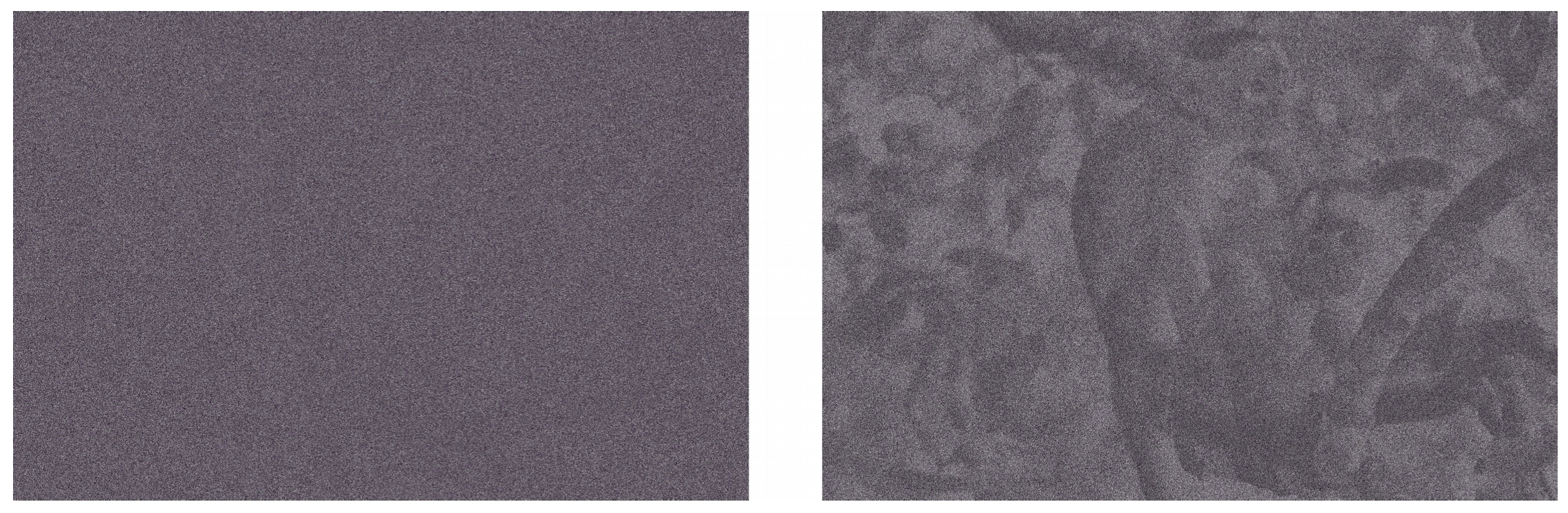}
\caption{Right corrected pixels are colored with
purple.}
\end{minipage}
\end{figure}

As expected, the picture on the left is much more color homogeneous
(purple-like), because using ML to decode with a perfect code minimizes the
amount of errors. However, one can identify the pixels to be painted in purple
only when having the original picture. When looking at the picture as it was
decoded using the two different decoding schemes, one gets a quite different perception:

\begin{figure}[!h]
\centering
\begin{minipage}[b]{1\linewidth}
\includegraphics[width=\linewidth]{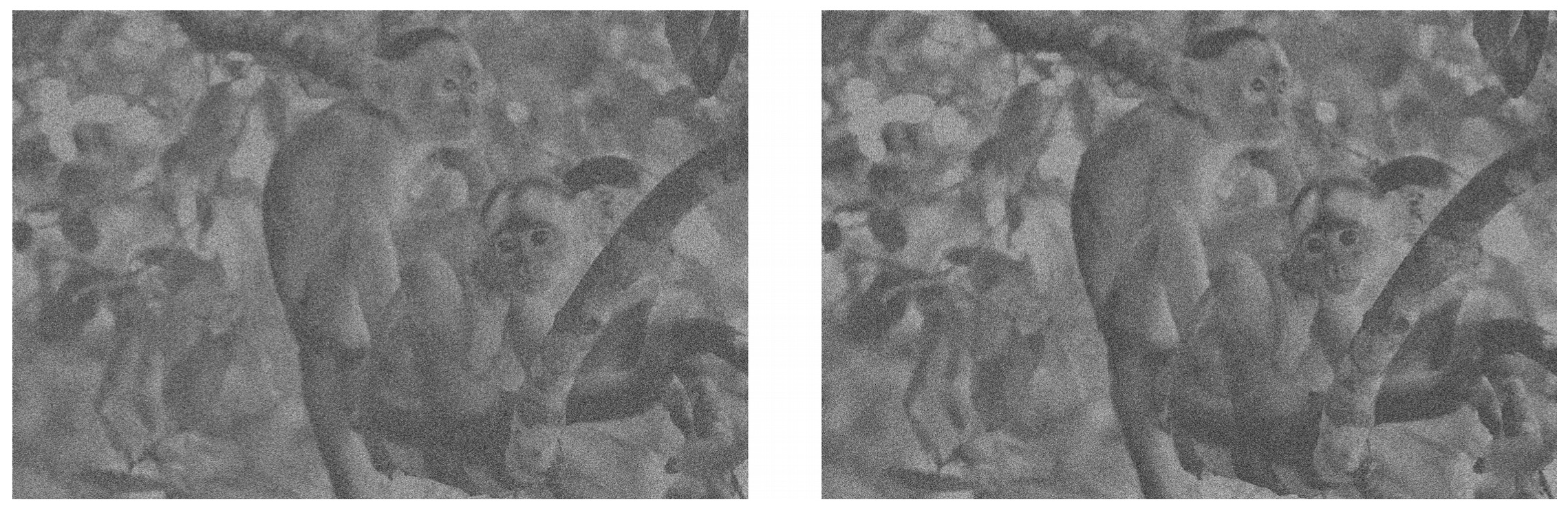}
 \label{fig:9}
   \caption{On the left, ML decoding, and, on the right, NN decoding; $p=0.35$.}
\end{minipage}
\end{figure}

The right-hand image seems to be sharper, closer to the original picture
(Figure 11). This perception regarding the quality of these decoded
pictures is an example of a way of valuing the error, in a situation in
which each of us, ordinary viewers, may be considered as a type of expert.

In Figures 12 and 13 we can see that, even under a very high word error probability ($p=0.4$ and $p=0.43$), it is possible to grasp something of the original message.

\begin{figure}[!h]
\centering
\begin{minipage}[b]{1\linewidth}
\includegraphics[width=\linewidth]{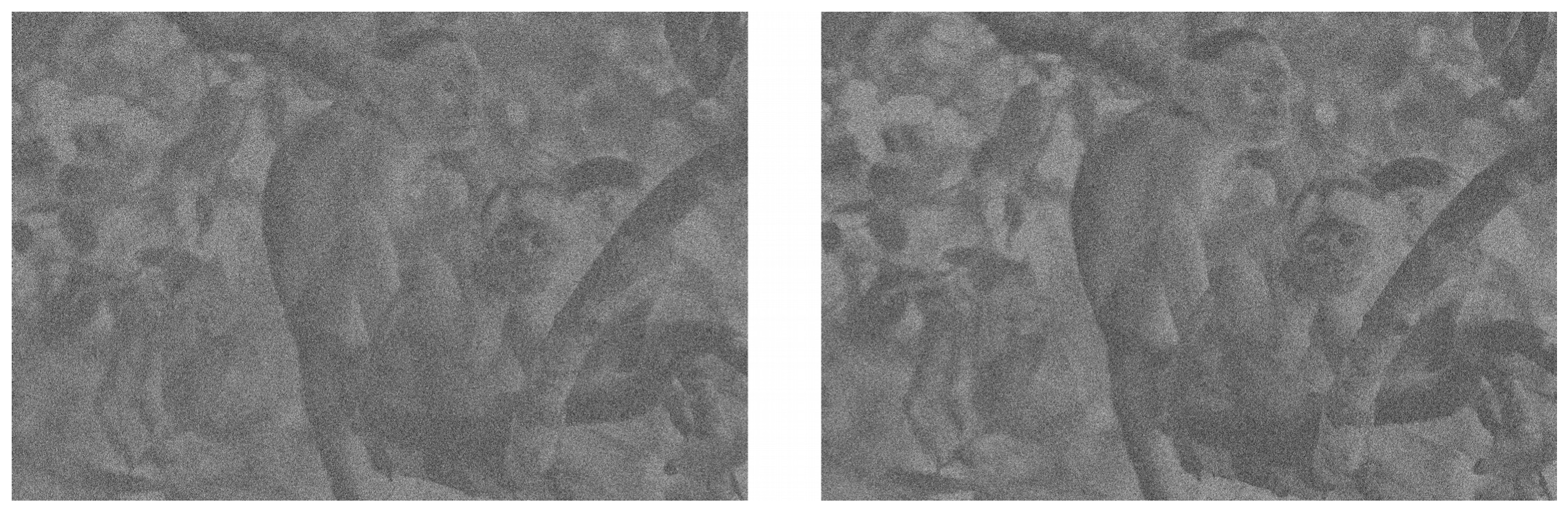}
\label{fig:10}
   \caption{On the left ML decoding and on the right $T$-decoding; $p=0.4$.}
\end{minipage}
\end{figure}

\begin{figure}[!h]
\centering
\begin{minipage}[b]{1\linewidth}
\includegraphics[width=\linewidth]{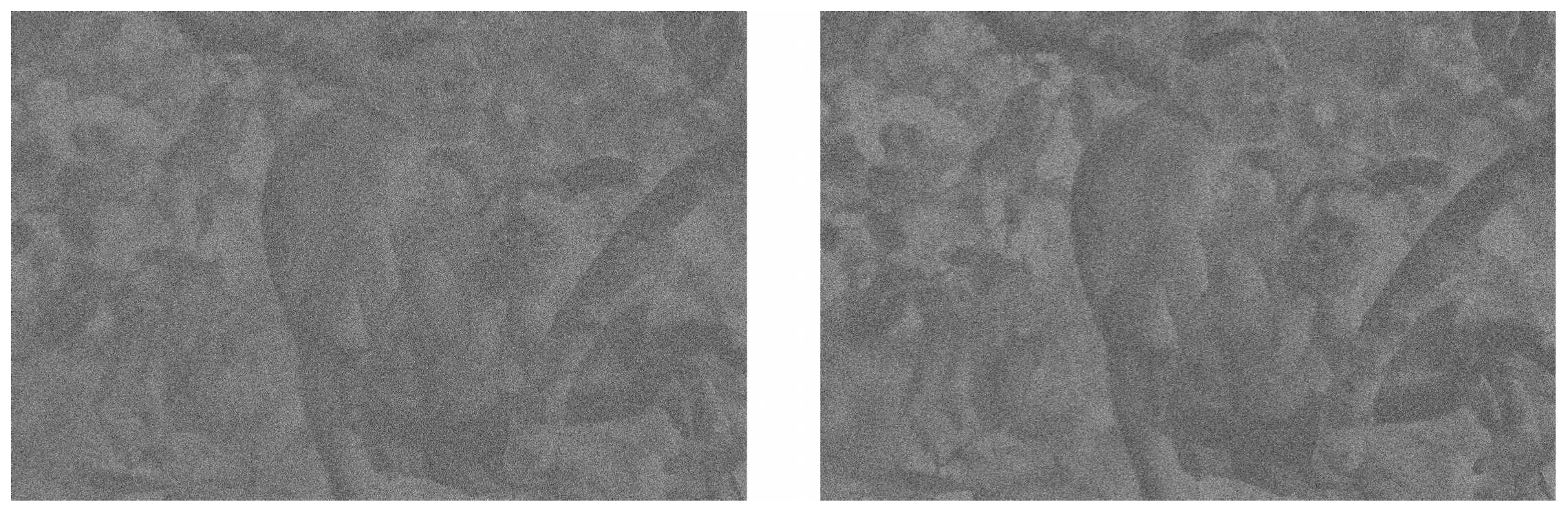}
\label{fig:11}
   \caption{On the left ML decoding and on the right $T$-decoding; $p=0.43$.}
\end{minipage}
\end{figure}

We are able to compute expected loss for those small examples. In the pictures
bellow we graph the expected loss functions in many situations. In each of
them, we are considering one code and  two decoders: the Hamming decoder $a_{H}$ in red and
the total-order decoder $a_{T}$ in black). For each of those codes, we consider
always an lexicographic encoder. The channel is considered to be a BMSC with
overall error probability $p$ and the value function is the one determined by MSE.

\begin{figure}[!h]
\centering
\begin{minipage}[b]{0.47\linewidth}
\includegraphics[width=\linewidth]{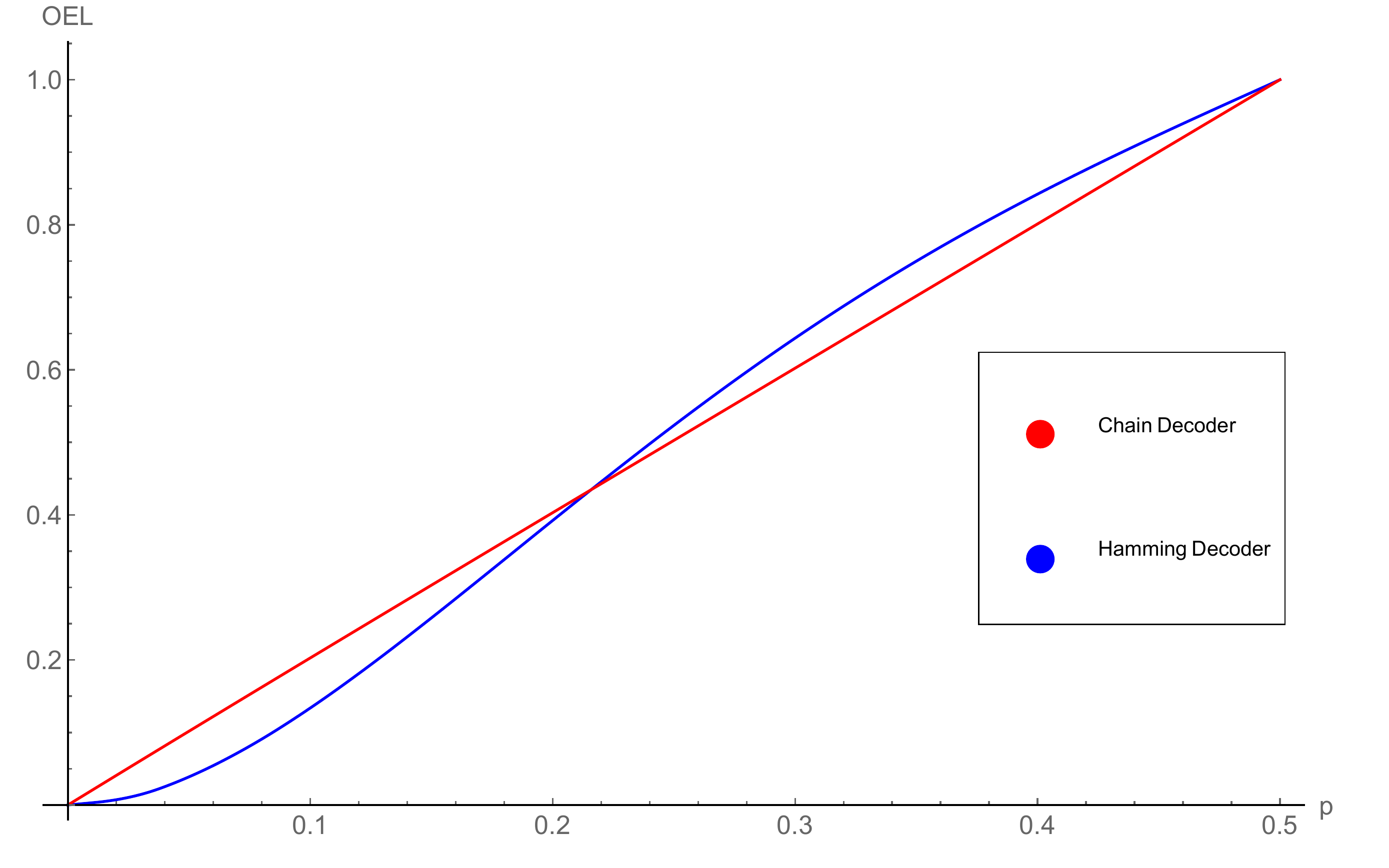}
\caption{Code $\mathcal{H}(3)$.}
\label{fig:codeH(3)}
\end{minipage} \hfill
\begin{minipage}[b]{0.47\linewidth}
\includegraphics[width=\linewidth]{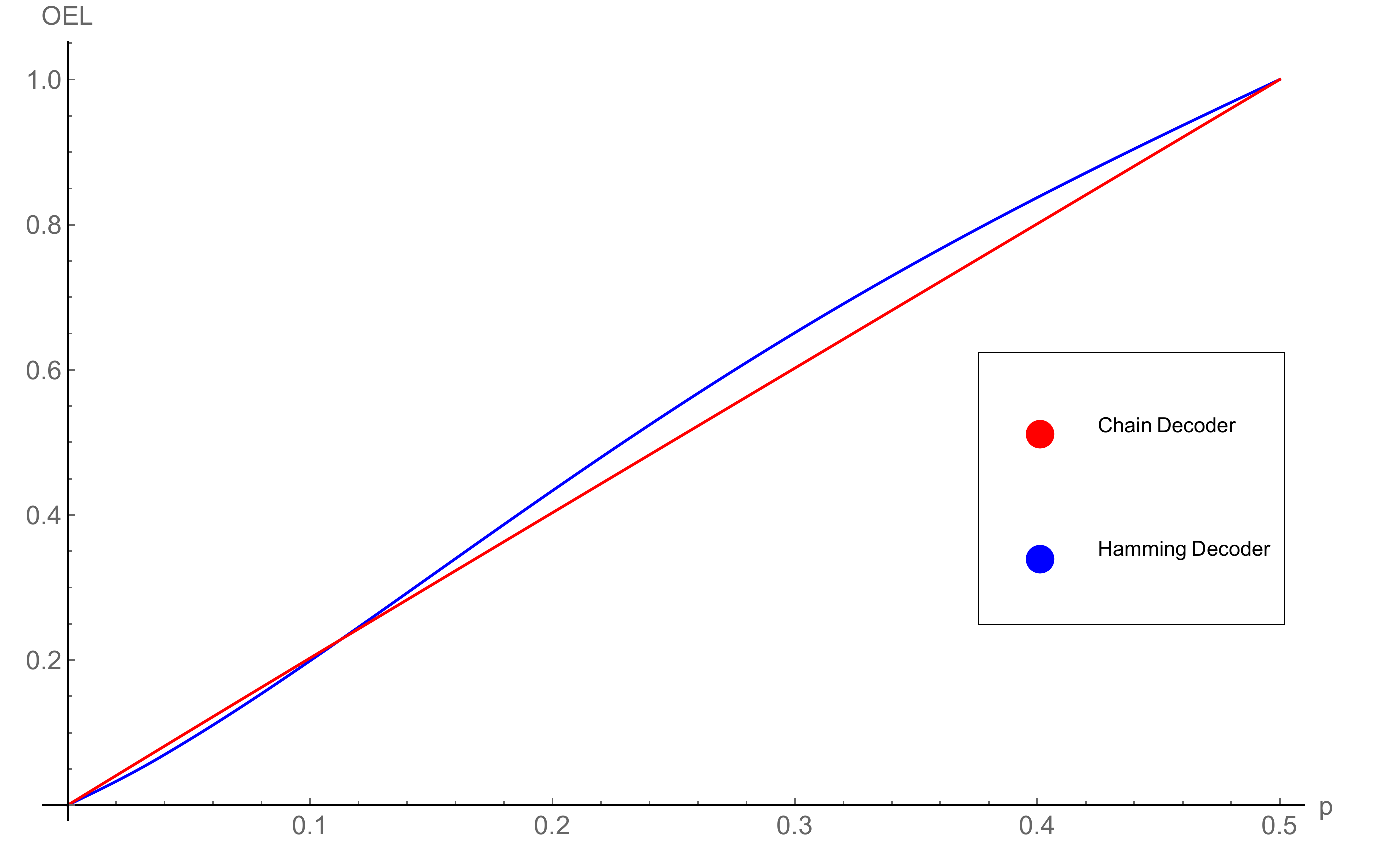}
\caption{Code $\mathcal{C}(3)$.}
\label{fig:codeC(3)}
\end{minipage}
\end{figure}

\begin{figure}[!h]
\centering
\begin{minipage}[b]{0.47\linewidth}
\includegraphics[width=\linewidth]{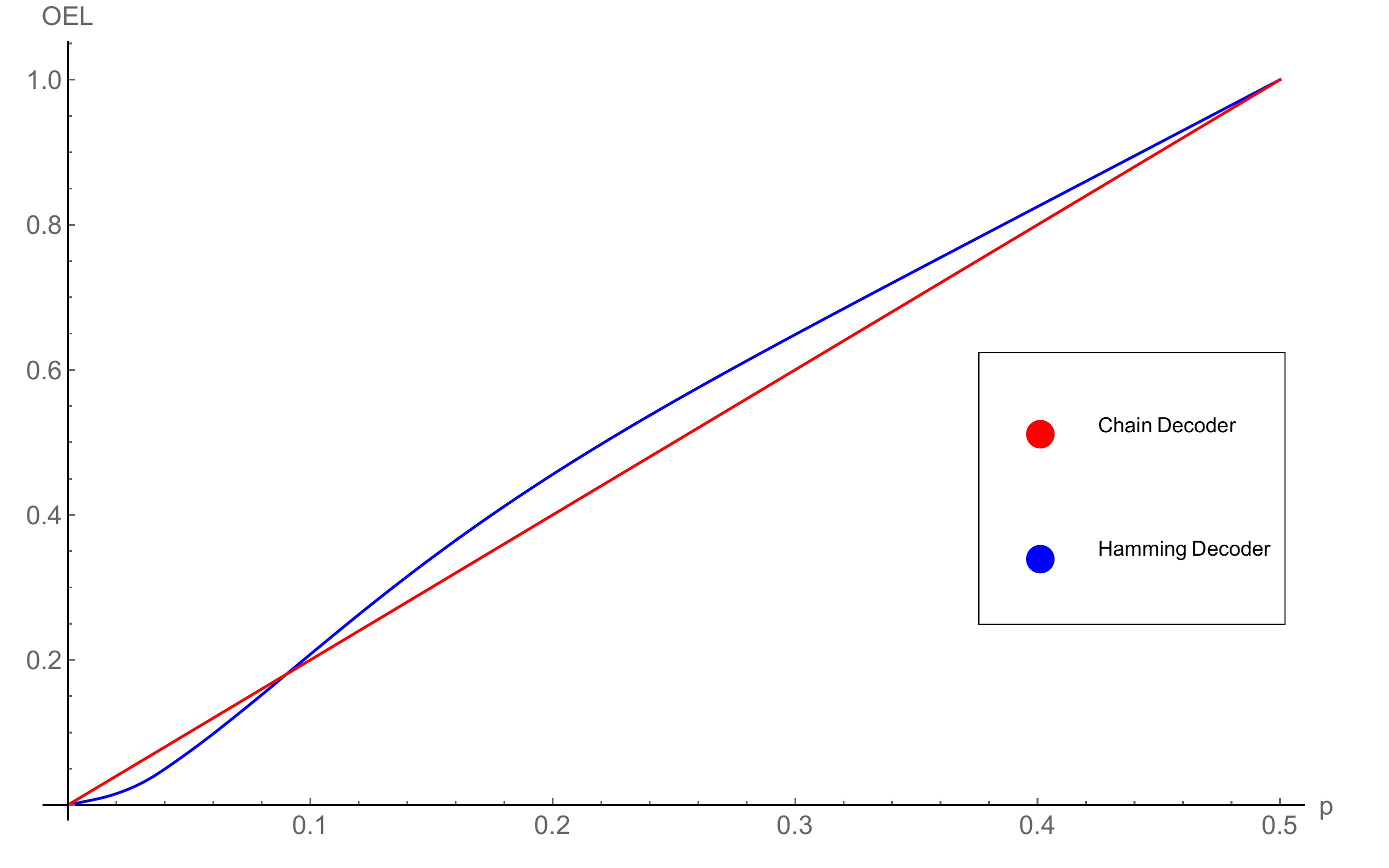}
\caption{Code $\mathcal{H}(4)$.}
\label{fig:codeH(4)}
\end{minipage} \hfill
\begin{minipage}[b]{0.47\linewidth}
\includegraphics[width=\linewidth]{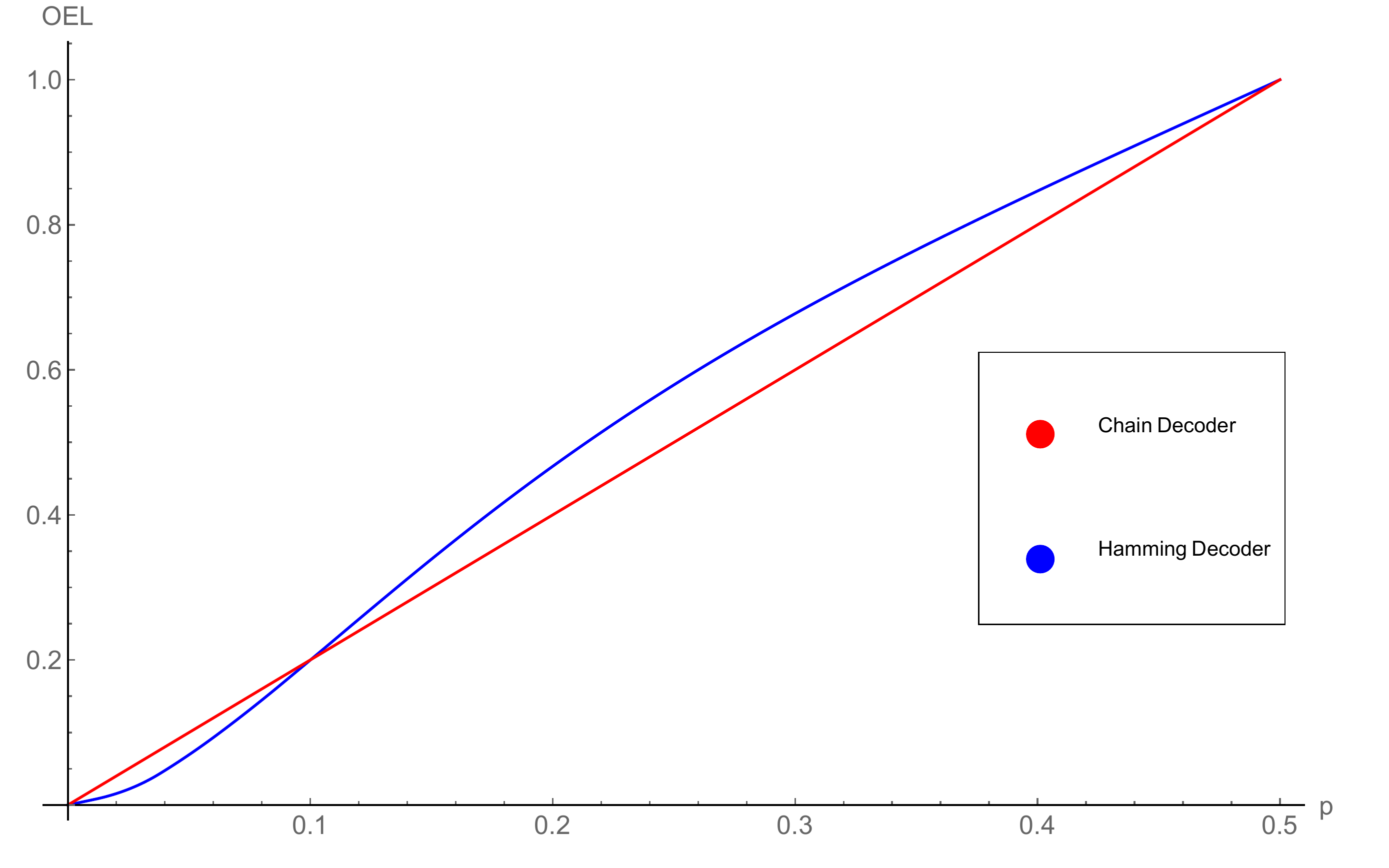}
\caption{Code $\mathcal{C}(4)$.}
\label{fig:codeC(4)}
\end{minipage}
\end{figure}

The first two pictures represent the situation with $4$ bits of color
(information) and the same redundancy, so we consider two $\left[  7;4\right]
_{2}$ linear codes: the Hamming code $\mathcal{H}\left(  3\right)  $ and the
code $\mathcal{C}\left(  3\right)  $, both introduced in the end of Section
\ref{encoding}. In the last two pictures, we consider $11$ bits of color
(information) and the same redundancy, so we consider two $\left[
15;11\right]  _{2}$ linear codes: the Hamming code $\mathcal{H}\left(
4\right)  $ and the code $\mathcal{C}\left(  4\right)  $ that has a parity
check matrix
\[
\left(
\begin{array}
[c]{ccccccccccccccc}%
1 & 0 & 0 & 0 & 1 & 1 & 1 & 0 & 0 & 0 & 0 & 0 & 1 & 1 & 0\\
0 & 1 & 0 & 0 & 0 & 1 & 0 & 1 & 1 & 1 & 0 & 1 & 0 & 1 & 0\\
0 & 0 & 1 & 0 & 0 & 0 & 0 & 1 & 0 & 0 & 0 & 1 & 0 & 0 & 1\\
0 & 0 & 0 & 1 & 0 & 0 & 0 & 0 & 0 & 1 & 0 & 1 & 0 & 1 & 1
\end{array}
\right)  \text{.}%
\]

As we can see, for small $p$, the ML decoder seems to be the best choice.
Nevertheless, those pictures depict the existence of a threshold $p_{0}$ above
which the ordered decoders should be used. We do not know what would be
the behavior of such a threshold for large $n$, but it seems that, for (very)
noisy channels, using those ordered decoders may improve the performance
and yield an extra bonus: very efficient decoding algorithms (details in
\cite{Felix}).

\section{Error Value Functions that are Invariant by Translations\label{translation}}

In this section, we assume that the information set $\mathcal{I}$ is identified
with the vector field $\mathbb{F}_{q}^{k}$, that is, we assume $\mathcal{I}%
=\mathbb{F}_{q}^{k}$, and we consider a special class of error value functions over
$\mathbb{F}_{q}^{k}$, those that are invariant by translations. We assume also
that $\mathcal{X}=\mathcal{Y}=\mathbb{F}_{q}^{n}$ and that  $C$ is an $\left[
n;k\right]  _{q}$ linear code.

We say that an error value function $\nu:\mathbb{F}_{q}^{k}\times
\mathbb{F}_{q}^{k}\rightarrow\mathbb{R}_{+}$ is \textit{invariant by
translations} if
\begin{equation}
\nu\left(  x+z,y+z\right)  =\nu\left(  x,y\right)  \label{invariant}%
\end{equation}
for all $x,y,z\in\mathbb{F}_{q}^{k}$. In this case, we may consider a function
$\widetilde{\nu}:\mathbb{F}_{q}^{k}\rightarrow\mathbb{R}_{+}$ (also called an
\textit{error value function}) defined by $\widetilde{\nu}\left(  x\right)
:=\nu\left(  x,0\right)  $ because $\nu\left(  x,y\right)  =\nu\left(
x-y,0\right)  $.

We remark that, whenever there is a significant \emph{real number model} for
the information as an injective map $m:\mathcal{I}\rightarrow\mathbb{R}$, an
error value function $\nu$ is said to be \emph{compatible with the model } if
$\mu\left(  \iota_{1},\iota_{2}\right)  $ is increasing with the difference
$\left\vert m\left(  \iota_{1}\right)  -m\left(  \iota_{2}\right)  \right\vert
$. It is not difficult to see that an error value function invariant by translations
can not be compatible with a real number model for the information, and, in such
a situation, a translation invariant error value function can be regarded only as a
simplified model for errors. We also remark that, when defining invariance by
translations, we assumed that the information set $\mathcal{I}$ is identified
with the vector space $\mathbb{F}_{q}^{k}$, let us say by a bijection
$\sigma:\mathcal{I}\rightarrow\mathbb{F}_{q}^{k}$. The identity
(\ref{invariant}) actually depends on $\sigma$, so that in fact we should say
$\nu$ is $\sigma$-invariant invariant by translations. Nevertheless, we assume
$\sigma$ is given and fixed, and hence the notation we adopt ignores its role.

The advantage of error value functions that are invariant by translations is that
this allows us to determine Bayes encoders, as we see in the next two propositions.

As stated in equation (\ref{perda esperada}), we express
\[
\mathbb{E}\left(  C,a,\nu_{f}\right)  =\sum_{\left(  c,c^{\prime}\right)  \in
C\times C}G_{a}\left(  c,c^{\prime}\right)  \nu_{f}\left(  c,c^{\prime
}\right)
\]
where $G_{a}\left(  c,c^{\prime}\right)  =\sum_{y\in a^{-1}\left(  c\right)
}P\left(  \left.  c^{\prime}\right\vert y\right)  P\left(  y\right)  $.
Assuming now that the encoder $f:\mathbb{F}_{q}^{k}\rightarrow C$ is a linear
map, we have that $\nu_{f}\left(  c,c^{\prime}\right)  =\nu_{f}\left(
c-c^{\prime},0\right)  =\widetilde{\nu}_{f}\left(  c-c^{\prime}\right)  $, and,
writing $u=c-c^{\prime}$, we find that
\begin{align*}
\mathbb{E}\left(  C,a,\nu_{f}\right)   &  =\sum_{\left(  c,c^{\prime}\right)
\in C\times C}G_{a}\left(  c,c^{\prime}\right)  \widetilde{\nu}_{f}\left(
c-c^{\prime}\right) \\
&  =\sum_{u\in C}\left(  \sum_{c-c^{\prime}=u}G_{a}\left(  c,c^{\prime
}\right)  \right)  \widetilde{\nu}_{f}\left(  u\right) \\
&  =\sum_{u\in C}\left(  \sum_{c\in C}G_{a}\left(  c,c-u\right)  \right)
\widetilde{\nu}_{f}\left(  u\right)
\end{align*}
and we get the following:

\begin{proposition}
If $\nu$ is an error value function invariant by translations and $f$ is a
linear encoder, then
\[
\mathbb{E}\left(  C,a,\nu_{f}\right)  =\sum_{c\in C}H_{a}(c)\widetilde{\nu
}_{f}\left(  c\right)
\]
where
\[
H_{a}(c):=\sum_{c\in C}G_{a}\left(  c,c-u\right)  =\sum_{y\in\mathbb{F}%
_{q}^{n}}P\left(  a(y)-c|y\right)  P(y).
\]

\end{proposition}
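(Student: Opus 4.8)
The plan is to build on the three-line computation immediately preceding the statement, which already performs most of the work for the first identity; what genuinely needs argument is the passage from translation invariance of $\nu$ to the difference-only dependence of $\nu_f$, and the collapse of a double sum into the single sum over $\mathbb{F}_q^n$ that appears in the alternate expression for $H_a$. I would organize the argument in three movements: first, transfer translation invariance through the linear encoder; second, regroup $\mathbb{E}(C,a,\nu_f)$ by the difference $u=c-c'$; third, rewrite the resulting coefficient $H_a$ using the partition of $\mathbb{F}_q^n$ into decision regions.

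First I would establish that $\nu_f(c,c')=\widetilde\nu_f(c-c')$. Writing $c=f(x)$ and $c'=f(y)$ with $x,y\in\mathbb{F}_q^k$, linearity of $f$ gives $f(x)-f(y)=f(x-y)$ and $f(0)=0$; combining the definition $\nu_f(f(x),f(y))=\nu(x,y)$ with the invariance $\nu(x,y)=\widetilde\nu(x-y)=\nu(x-y,0)$ then yields $\nu_f(c,c')=\nu_f(f(x-y),f(0))=\nu_f(c-c',0)=\widetilde\nu_f(c-c')$. Since $C$ is linear, the difference $c-c'$ ranges over $C$, so the reindexed sum will be supported on $C$. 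This is the one conceptual point that the assertion in the running text takes for granted and that I would spell out.

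With this in hand the regrouping is exactly the displayed computation: substituting $\nu_f(c,c')=\widetilde\nu_f(c-c')$ into (\ref{perda esperada}) and collecting the terms with $c-c'=u$ gives $\mathbb{E}(C,a,\nu_f)=\sum_{u\in C}\bigl(\sum_{c\in C}G_a(c,c-u)\bigr)\widetilde\nu_f(u)$. Relabeling the outer dummy $u$ as $c$, and using a distinct letter for the inner summation index (the collision the statement's notation unfortunately suppresses), produces the first identity with $H_a(c)=\sum_{c'\in C}G_a(c',c'-c)$.

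It then remains to verify the alternate form of $H_a$. Expanding $G_a(c',c'-c)=\sum_{y\in a^{-1}(c')}P(c'-c\mid y)P(y)$ via (\ref{polinomio G}) and summing over $c'\in C$, the double sum ranges over all pairs $(c',y)$ with $c'\in C$ and $y\in a^{-1}(c')$. Because the decision regions $\{a^{-1}(c')\}_{c'\in C}$ partition $\mathbb{F}_q^n$, each $y\in\mathbb{F}_q^n$ occurs for exactly one $c'$, namely $c'=a(y)$; substituting $c'=a(y)$ collapses the double sum into $\sum_{y\in\mathbb{F}_q^n}P(a(y)-c\mid y)P(y)$, the claimed expression. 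No step presents a real obstacle; the only point requiring care is bookkeeping — keeping the two dummy codewords distinct through the relabeling and invoking the partition property so that summing over $C$ and over each $a^{-1}(c')$ reassembles precisely one sum over $\mathbb{F}_q^n$.
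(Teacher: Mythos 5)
Your argument is correct and follows the same route as the paper, whose ``proof'' is precisely the three-line regrouping by $u=c-c'$ displayed just before the statement. You additionally supply the two details the paper leaves implicit --- that linearity of $f$ transfers translation invariance to $\nu_f(c,c')=\widetilde{\nu}_f(c-c')$, and that the partition of $\mathbb{F}_q^n$ into decision regions yields the second expression for $H_a$ --- and you rightly flag the dummy-variable collision in the paper's definition of $H_a(c)$, which should read $H_a(c)=\sum_{c'\in C}G_a(c',c'-c)$.
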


We remark that $\mathbb{E}\left(  C,a,\nu_{f}\right)  $ is a linear
combination of the values $\widetilde{\nu}_{f}\left(  c\right)  $, $c\in C$,
and that the coefficients $H_{a}(c)$ do not depend on the encoder $f$. Because the
values $\widetilde{\nu}_{f}\left(  c\right)  $ are all non-negative, it is
simple to minimize the expected loss; one should choose an encoder $f$ that
associates more valuable errors (higher $\widetilde{\nu}_{f}\left(
c\right)  $) to the lower coefficients $H_{a}(c)$.

\begin{theorem}
\label{teo. cod. bayes} Let $C=\left\{  c_{1},\ldots,c_{M}\right\}  $ be a
linear code, $a:\mathbb{F}_{q}^{n}\rightarrow C$ a decoder and $\nu
:\mathbb{F}_{q}^{k}\times\mathbb{F}_{q}^{k}\rightarrow\mathbb{R}_{+}$ an error
value function invariant by translations. We assume without loss of generality
that
\[
H_{a}\left(  c_{1}\right)  \geq H_{a}\left(  c_{2}\right)  \geq\ldots\geq
H_{a}\left(  c_{M}\right)  .
\]
Then, a linear encoder $f:\mathbb{F}_{q}^{k}\rightarrow C$ is a Bayes encoder
for the error value function $\nu$ and the decoder $a$ if and only if
\[
\widetilde{\nu}_{f}\left(  c_{1}\right)  \leq\widetilde{\nu}_{f}\left(
c_{2}\right)  \leq\cdots\leq\widetilde{\nu}_{f}\left(  c_{M}\right)  \text{.}%
\]

\end{theorem}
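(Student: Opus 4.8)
The plan is to reduce the statement to a rearrangement problem using the formula from the preceding Proposition. For a linear encoder we have $\mathbb{E}\left(C,a,\nu_f\right)=\sum_{c\in C}H_a(c)\widetilde{\nu}_f(c)$, and the coefficients $H_a(c)$ do not depend on $f$; the entire dependence on the encoder is carried by the values $\widetilde{\nu}_f(c)=\widetilde{\nu}\left(f^{-1}(c)\right)$, where I use $f(0)=0$ for linear $f$. The first key step is to observe that, as $f$ ranges over all linear encoders, the \emph{multiset} of values $\left\{\widetilde{\nu}_f(c):c\in C\right\}$ is invariant: since $f^{-1}$ restricts to a bijection $C\rightarrow\mathbb{F}_q^k$, this multiset always equals $\left\{\widetilde{\nu}(u):u\in\mathbb{F}_q^k\right\}$. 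Hence choosing a linear encoder amounts to distributing one fixed multiset of non-negative weights over the fixed, decreasingly ordered coefficients $H_a(c_1)\geq\cdots\geq H_a(c_M)$.

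Second, I would invoke the rearrangement inequality. Writing $b_i:=\widetilde{\nu}_f(c_i)$, the quantity $\sum_i H_a(c_i)b_i$ --- a sum of products of a fixed decreasing non-negative sequence against a permutation of a fixed non-negative multiset --- is minimized exactly when the $b_i$ are arranged increasingly, i.e.\ oppositely to the $H_a(c_i)$. This is precisely the condition $\widetilde{\nu}_f(c_1)\leq\cdots\leq\widetilde{\nu}_f(c_M)$, and it yields both implications at the level of arrangements: an increasing arrangement attains the global minimum of the sum, while any arrangement attaining that minimum must be increasing, up to reordering inside blocks where $H_a$ is constant (which I would treat as the only source of ties and non-uniqueness).

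The step I expect to be the main obstacle is the passage from ``arrangements of the multiset'' to ``linear encoders.'' Not every permutation of $\mathbb{F}_q^k$ is induced by a linear automorphism --- $\mathrm{GL}_k(\mathbb{F}_q)$ is far smaller than the full symmetric group once $k\geq 3$ --- so it is not automatic that the increasing arrangement is realized by a genuinely linear $f$, nor that every linear minimizer is increasing rather than merely optimal among the linearly achievable orderings. A clean sufficient setting is when the zero codeword carries the largest coefficient, $H_a(0)=\max_c H_a(c)$: this forces the unavoidable value $\widetilde{\nu}_f(0)=\widetilde{\nu}(0)=0$ (the minimum of the multiset) into the correct leading position, removing the obstruction created by $0$ being a fixed point of every linear map. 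I would verify that $H_a(0)$ is maximal for the decoders of interest --- for an ML (or \emph{reasonable}) decoder one has $P\left(a(y)\mid y\right)\geq P\left(a(y)-u\mid y\right)$ pointwise in $y$, whence $H_a(0)\geq H_a(u)$ for all $u$ --- and then reduce the realizability of the remaining ordering to the action of $\mathrm{GL}_k(\mathbb{F}_q)$ on the nonzero codewords, which is where any residual difficulty for large $k$ would concentrate.
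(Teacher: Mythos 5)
Your rearrangement reading is exactly the argument the paper has in mind: in fact the paper offers no proof of Theorem \ref{teo. cod. bayes} at all, only the remark following the preceding proposition that one should pair the larger values $\widetilde{\nu}_{f}\left(  c\right)  $ with the smaller coefficients $H_{a}\left(  c\right)  $. Your reduction is correct as far as it goes: for linear $f$ one has $\widetilde{\nu}_{f}\left(  c\right)  =\widetilde{\nu}\left(  f^{-1}\left(  c\right)  \right)  $, the multiset $\left\{  \widetilde{\nu}\left(  u\right)  :u\in\mathbb{F}_{q}^{k}\right\}  $ does not depend on $f$, and the rearrangement inequality settles the \emph{if} direction completely, since a sorted linear encoder attains the minimum of $\sum_{i}H_{a}\left(  c_{i}\right)  b_{i}$ over all arrangements $b$ of that multiset and a fortiori over the arrangements realized by linear encoders.

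The obstruction you single out, however, is not a residual technicality to be deferred; it is where the \emph{only if} direction actually fails. Take $q=2$, $n=k=3$, $C=\mathbb{F}_{2}^{3}$, a BSC with $p<1/2$ and the identity (ML) decoder, so that $H_{a}\left(  u\right)  =p^{w_{H}\left(  u\right)  }\left(  1-p\right)  ^{3-w_{H}\left(  u\right)  }$ is decreasing in the Hamming weight, with ties exactly on the weight classes. Choose a translation-invariant $\nu$ whose associated $\widetilde{\nu}$ takes its three smallest nonzero values on the linearly dependent set $\left\{  001,010,011\right\}  $. The sorted condition then forces $f^{-1}$ to carry the weight-one class $\left\{  e_{1},e_{2},e_{3}\right\}  $, a basis, onto that dependent set, which no linear bijection can do; so no linear encoder is sorted, while a minimizer over the finite nonempty family of linear encoders certainly exists and therefore violates the stated equivalence. (If instead one reads ``Bayes encoder'' as optimal among \emph{all} bijective encoders, as in the paper's Section \ref{back} definition, the situation is no better: for non-linear $g$ the loss $\nu_{g}\left(  c,c^{\prime}\right)  =\widetilde{\nu}\left(  g^{-1}\left(  c\right)  -g^{-1}\left(  c^{\prime}\right)  \right)  $ is no longer a function of $c-c^{\prime}$, the expected loss does not reduce to $\sum_{c}H_{a}\left(  c\right)  \widetilde{\nu}_{g}\left(  c\right)  $, and even the \emph{if} direction is unproven.) Your observation that $H_{a}\left(  0\right)  \geq H_{a}\left(  u\right)  $ for ML decoders correctly pins the value $\widetilde{\nu}\left(  0\right)  $ to the right slot, but the realizability of the remaining ordering under $\mathrm{GL}_{k}\left(  \mathbb{F}_{q}\right)  $ is a genuine additional hypothesis, not something that can be verified in general; the theorem needs either that hypothesis or a restatement purely in terms of arrangements of the multiset.
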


From here on, we assume that the prior probability of $C$ is uniform.



We now explore an example that illustrates the preceding results for the case of a perfect code. Let us assume
the channel to be a binary memoryless symmetric channel with overall error
probability $p$. Let
$w_{H}(c):=d_{H}\left(  c,0\right)  $ be the Hamming weight of $c$ and let
$supp(x):=\{i:x_{i}\neq0\}$ be the \textit{support} of the vector
$x=(x_{1},\ldots,x_{n})$.

Let $\mathcal{H}(l)$ be the  $\left[  n;k\right]  _{2}$\emph{ binary Hamming
code}, where $n=2^{l}-1$ is the length of the block and $k=2^{l}-1-l$ the
dimension of the code. Let $a:\mathbb{F}_{2}^{n}\rightarrow\mathcal{H}(l)$ be
an ML decoder, so that $d_{H}\left(  y,a\left(  y\right)  \right)  \leq
d_{H}\left(  y,c\right)  $, for all $y\in\mathbb{F}_{2}^{n}$ and every
$c\in\mathcal{H}\left(  l\right)  $. In this situation, direct computations show that%
\begin{equation}
H_{a}\left(  c\right)  =\frac{\left(  1-p\right)  ^{n}}{M}\sum_{y\in
\mathbb{F}_{2}^{n}}s^{d_{H}\left(  y,a\left(  y\right)  -c\right)  }
\label{H_a}%
\end{equation}
where $s:=\frac{p}{1-p}$ and $M:=\left\vert \mathcal{H}(l)\right\vert $. Since
$\mathcal{H}\left(  l\right)  $ is a perfect code that corrects a single error
and $a$ is an ML decoder, we find that either $y=a\left(  y\right)  $ or
$y-a\left(  y\right)  =e_{i}$, a vector with Hamming weight $w_{H}\left(
e_{i}\right)  =1$. This ensures that
\[
d_{H}\left(  y,a\left(  y\right)  -c\right)  =
\left\{
\begin{array}
[c]{cc}%
w_{H}\left(  e_{i}+c\right) & \text{for some }1\leq i\leq N\text{ if }%
y\notin\mathcal{H}\left(  l\right) \\
w_{H}\left(  c\right) & \text{ otherwise}%
\end{array}
\right.  \text{.}%
\]
We write
\begin{align*}
\sum_{y\in\mathbb{F}_{2}^{n}}s^{d_{H}\left(  y,a\left(  y\right)  -c\right)} & =
\sum_{y\in\mathcal{H}\left(  l\right)  }s^{d_{H}\left(  y,a\left(
y\right)  -c\right)  }+\sum_{y\notin\mathcal{H}\left(  l\right)  }%
s^{d_{H}\left(  y,a\left(  y\right)  -c\right)  } \\ &
=\sum_{y\in\mathcal{H}\left(  l\right)  }s^{w_{H}\left(  c\right)  }%
+M\cdot\sum_{i\notin\text{\emph{supp}}\left(  c\right)  }s^{w_{H}\left(
e_{i}+c\right)  }
+M\cdot\sum_{i\in\text{\emph{supp}}\left(  c\right)
}s^{w_{H}\left(  e_{i}+c\right)  }\text{.}
\end{align*}
Because
\[
\sum_{i\notin\text{\emph{supp}}\left(  c\right)  }s^{w_{H}\left(
e_{i}+c\right)  }=\left(  n-w_{H}\left(  c\right)  \right)  s^{w_{H}\left(
c\right)  +1}%
\]
and
\[
\sum_{i\in\text{\emph{supp}}\left(  c\right)  }s^{w_{H}\left(  e_{i}+c\right)
}=w_{H}\left(  c\right)  s^{w_{H}\left(  c\right)  -1}\text{,}%
\]
considering expression (\ref{H_a}), it follows that
\[
H_{a}\left(  c\right)  =\left(  1-p\right)  ^{n}  (s^{w_{H}\left(
c\right)  }+\left(  n-w_{H}\left(  c\right)  \right)  s^{w_{H}\left(
c\right)  +1}
+w_{H}\left(  c\right)  s^{w_{H}\left(  c\right)  -1})
\text{,}%
\]
which depends only on $w_{H}\left(  c\right)  $ but not on $c$. Now it is
possible to show that, for the Hamming code $\mathcal{H}\left(  l\right)  $,
$H_{a}\left(  c\right)  \geq H_{a}\left(  c^{\prime}\right)  $ if
$w_{H}\left(  c\right)  \leq w_{H}\left(  c^{\prime}\right)  $. Using this and
Theorem \ref{teo. cod. bayes}, it follows that $f:\mathbb{F}_{2}%
^{k}\rightarrow\mathcal{H}\left(  l\right)  $ is a Bayes encoder of a
Hamming code iff
\begin{equation}
\widetilde{\nu}_{f}\left(  c\right)  \leq\widetilde{\nu}_{f}\left(  c^{\prime
}\right)  \text{ whenever }w_{H}\left(  c\right)  \leq w_{H}\left(  c^{\prime
}\right)  \text{.} \label{perfect}%
\end{equation}

Similar reasoning may be used to compute the coefficients $H_{a}\left(
c\right)  $ of the perfect $[23;12]_{2}$ Golay code and to show that condition
(\ref{perfect}) holds also for this code. We call such an encoder a \textit{weight
priority encoder}.

Finally, in addition to giving a good description of Bayes encoders, the use of a value function that is invariant by translation may be justified also by the fact that it generalizes two well-known measures of loss. When introducing the expected loss in Section
\ref{section 1}, we already showed that the word error probability may be
considered as a particular instance of an expected loss function (by
considering the indicator function $\nu_{0\text{-}1}$ to be the loss
function). If we assume now that the channel is \textit{invariant by
translations}, in the sense that $$P\left(  \left.  y\right\vert x\right)
=P\left(  \left.  y+z\right\vert x+z\right)  $$ for all $x,y,z\in\mathbb{F}%
_{q}^{n}$, we may look also at the \textit{bit error probability} $P_{b}(C)$
of $C$ (BER) as a particular case of the expected loss function. This is
attained by considering a decoder $a:\mathbb{F}_{q}^{n}\rightarrow C$ that is
also \textit{invariant by translations}, in the sense that $$a^{-1}\left(  c\right)
=c+a^{-1}\left(  0\right)  $$ for every $c\in C$, considering any encoder
$f:\mathbb{F}_{q}^{k}\rightarrow C$ and the value function $\nu:\mathbb{F}%
_{q}^{k}\rightarrow\mathbb{R}_{+}$ defined by
\[
\nu_{f}\left(  c\right)  =\frac{w_{H}\left(  f^{-1}\left(  c\right)  \right)
}{k}\text{.}%
\]
The proof follows by direct calculations and comparison with the expression for
BER, as given, for example, in \cite{Bigliere}.

\section*{Acknowledgment}

M. Firer wish to acknowledge the support of São Paulo Research Foundation
(FAPESP) through grants 2013/25977-7 and 2013/09493-0. The illustrating pictures
 in this work were produced using a software developed by
Vanderson Martins do Rosario, an undergraduate student at Universidade Estadual
de Maringá (UEM), while he was in high school. The authors are deeply grateful
to Vanderson. This work was presented in part at ITA 2013.

\end{document}